\pgfplotsset{compat=1.14}
\DeclareMathOperator{\Lg}{lg}
\DeclareMathOperator{\Argmin}{argmin}
\DeclareMathOperator{\Argmax}{argmax}
\DeclareMathOperator{\Polylog}{\polylog}
\DeclareMathOperator{\dist}{dist}
\DeclareMathOperator{\probability}{\mathbb{P}}
\def\distG{\dist_{G^*}}
\def\eps{\varepsilon}
\def\cost{\textsc{Cost}}
\def\spread{\textsc{Sp}}
\def\dartof#1{\vec{#1}}
\def\dartsof#1{\dartof{#1}}
\def\event{e}
\newtheorem{lemma}{Lemma}[section]
\newtheorem{theorem}[lemma]{Theorem}
\begin{document}

% \authorrunning{K. Fox and J. Lu}%TODO mandatory. First: Use abbreviated first/middle names. Second (only in severe cases): Use first author plus 'et al.'

% \Copyright{Kyle Fox and Jiashuai Lu}%TODO mandatory, please use full first names. LIPIcs license is "CC-BY";  http://creativecommons.org/licenses/by/3.0/
% %\DRAFT

% \keywords{Transportation map, earth mover's distance, shape matching, approximation algorithms}

% \begin{CCSXML}
% <ccs2012>
% <concept>
% <concept_id>10003752.10003809.10003635.10003644</concept_id>
% <concept_desc>Theory of computation~Network flows</concept_desc>
% <concept_significance>500</concept_significance>
% </concept>
% <concept>
% <concept_id>10003752.10010061.10010063</concept_id>
% <concept_desc>Theory of computation~Computational geometry</concept_desc>
% <concept_significance>500</concept_significance>
% </concept>
% </ccs2012>
% \end{CCSXML}

% \ccsdesc[500]{Theory of computation~Computational geometry}
% \ccsdesc[500]{Theory of computation~Network flows}

% \relatedversion{The full version of the paper is available at \hyperlink{https://arxiv.org/abs/1907.04426}{https://arxiv.org/abs/1907.04426}.}

\begin{titlepage}
\title{A near-linear time approximation scheme for geometric transportation with arbitrary supplies and spread}

\author{
Kyle Fox%
\thanks{Department of Computer Science,
    The University of Texas at Dallas; \url{kyle.fox@utdallas.edu}.}
\and Jiashuai Lu%
\thanks{Department of Computer Science,
    The University of Texas at Dallas; \url{jiashuai.lu@utdallas.edu}.}
}

%\DRAFT

\maketitle

\begin{abstract}
The geometric transportation problem takes as input a set of points \(P\) in \(d\)-dimensional Euclidean space and a supply function \(\mu : P \to \R\).
The goal is to find a transportation map, a non-negative assignment \(\tau : P \times P \to \R_{\geq 0}\) to pairs of points, so the total assignment leaving each point is equal to its supply, i.e., \(\sum_{r \in P} \tau(q, r) - \sum_{p \in P} \tau(p, q) = \mu(q)\) for all points \(q \in P\).
The goal is to minimize the weighted sum of Euclidean distances for the pairs, \(\sum_{(p, q) \in P \times P} \tau(p, q) \cdot ||q - p||_2\).

We describe the first algorithm for this problem that returns, with high probability, a \((1 + \eps)\)-approximation to the optimal transportation map in \(n\eps^{-O(d)}\log^{O(d)}{n}\) time.
In contrast to the previous best algorithms for this problem, our near-linear running time bound is independent of the spread of \(P\) and the magnitude of its real-valued supplies.
\end{abstract}

\setcounter{page}{0}
\thispagestyle{empty}
\end{titlepage}

\pagestyle{myheadings}
\markboth{Kyle Fox and Jiashuai Lu}
		{\footnotesize A near-linear time approximation scheme for geometric transportation with arbitrary supplies and spread \normalsize}

%%
%% INTRODUCTION
%%

\section{Introduction}\label{sec:intro}
We consider the \EMPH{geometric transportation problem} in \(d\)-dimensional Euclidean space for any constant \(d\).
In this problem, we are given a set \(P \subset \R^d\) of \(n\) points.
Each point is assigned a real \EMPH{supply} \(\mu : P \to \R\) where \(\sum_{p \in P} \mu(p) = 0\).
A \EMPH{transportation map} is a non-negative assignment \(\tau : P \times P \to \R_{\geq 0}\) to pairs of points such that for all \(q \in P\) we have \(\sum_{r \in P} \tau(q, r) - \sum_{p \in P} \tau(p, q) = \mu(q)\).
The \EMPH{cost} of the transportation map is the weighted sum of Euclidean distances across all pairs, i.e. \(\sum_{(p, q) \in P \times P} \tau(p, q) \cdot ||q - p||_2\).
Our goal is to find a transportation map of minimum cost, and we denote this minimum cost as \(\cost(P, \mu)\).

One may imagine the points with positive supply as piles of earth and those with negative supplies as holes in the ground.
A transportation map describes how to transfer the earth to the holes without overfilling any hole, and its cost is the total number of ``earth-miles'' used to do the transfer.
Consequently, \(\cost(P, \mu)\) is often referred to as the earth mover's distance, although it can also be called the \(1\)-Wasserstein distance between measures over the positively and negatively supplied points.
The continuous version of the problem is sometimes called the \emph{optimal transport} or \emph{Monge-Kantorovich} problem, and it has been studied extensively by various mathematics communities~\cite{villani2008optimal}.
The discrete version we study here has applications in shape matching, image retrieval, and graphics~\cite{DBLP:conf/cvpr/GraumanD04,DBLP:journals/ijcv/RubnerTG00,DBLP:journals/tog/BonneelPPH11,DBLP:journals/tog/SolomonRGB14,DBLP:conf/icml/CuturiD14,DBLP:conf/eccv/GiannopoulosV02}.

Computing a transportation map that minimizes weighted Euclidean distances (or any other metric) is easily done in polynomial time by reduction to the uncapacitated minimum cost flow problem in a complete bipartite graph between points with positive supply and those with negative supply.
The graph has as many as \(\Omega(n^2)\) edges, so this approach takes~\(O(n^3 \Polylog{n})\) time using a combinatorial minimum cost flow algorithm of Orlin~\cite{DBLP:journals/ior/Orlin93}.
Assuming integral supplies with absolute values summing to \(U\), we can use an algorithm of Lee and Sidford~\cite{DBLP:conf/focs/LeeS14} instead to reduce the running time to \(O(n^{2.5} \Polylog{(n, U)})\).
%\note{Jiashuai, check if there's a \(\polylog n\) dependence as well.}
Taking advantage of the geometry inherent in the problem, Agarwal \emph{et al.}~\cite{DBLP:conf/compgeom/AgarwalFPVX17} describe how to implement Orlin's algorithm for arbitrary supplies to find the optimal transportation map in~\(O(n^2 \Polylog{n})\) time, but only for \(d = 2\).

We can significantly reduce these running times by accepting a small loss in optimality.
However, many results along this line focus on merely estimating the earth mover's distance instead
of computing the associated transportation map.
Indyk~\cite{DBLP:conf/soda/Indyk07} describes an \(O(n \Polylog{n})\) time algorithm that estimates the earth mover's distance within a constant factor assuming unit supplies.
Andoni \emph{et al.}~\cite{DBLP:conf/stoc/AndoniNOY14} describe an \(O(n^{1+ o(1)})\) time algorithm for arbitrary supplies that estimates the cost within a \(1 + \eps\) factor (the dependency on \(\eps\) is hiding in the \(o(1)\)).
As pointed out by Khesin, Nikolov, and Paramonov~\cite{khesin2021preconditioning}, a \(1 + \eps\) factor estimation of the distance is possible in \(n^{1 + o(1)} \eps^{-O(d)}\) time (without the \(o(1)\) hiding dependencies on \(\eps\)) by running an approximation algorithm for minimum cost flow by Sherman~\cite{DBLP:conf/soda/Sherman17} on a sparse Euclidean spanner over the input points.
However, it is not clear how to extract a nearly optimal transportation map using the spanner's flow.

Finding an actual transportation \emph{map} may be more difficult.
Sharathkumar and Agarwal~\cite{DBLP:conf/soda/SharathkumarA12} describe a \((1 + \eps)\)-approximation algorithm for the integral supply case (i.e., an algorithm returning a map of cost at most \((1 + \eps) \cdot \cost(P, \mu)\)) in \(O(n \sqrt{U} \Polylog{(U, \eps, n)})\) time.
Agarwal \emph{et al.}~\cite{DBLP:conf/compgeom/AgarwalFPVX17} describe a randomized algorithm with expected \(O(\log^2 (1 / \eps))\)-approximation ratio running in \(O(n^{1 + \eps})\) expected time for the arbitrary supply case and a deterministic \(O(n^{3/2} \eps^{-d} \Polylog{(U, n)})\) time \((1+\eps)\)-approximation algorithm for the bounded integral supply case.
Lahn \emph{et al.}~\cite{DBLP:conf/nips/LahnMR19} describe an \(O(n(C\delta)^2\Polylog{(U, n)})\)
(\(C=\max_{p\in P}{|\mu(p)|}\)) time algorithm computing a map of cost at most \(\cost(P,\mu)+\delta U\).
Recently, Khesin \emph{et al.}~\cite{khesin2021preconditioning} described a randomized \((1 + \eps)\)-approximation algorithm for the arbitrary supply case running in %\linebreak
\(n \eps^{-O(d)} \log^{O(d)}{\spread(P)} \log{n}\) time, where \(\spread(P)\) is the spread of the point set~\(P\).%
\footnote{Khesin \emph{et al.}~\cite{khesin2021preconditioning} and Agarwal \etal~\cite{DBLP:conf/compgeom/AgarwalFPVX17} present geometric transportation with integer supplies, but their unbounded supply algorithms work without modification when presented with real valued supplies.}
%In particular, Khesin \emph{et al.}~\cite{khesin2021preconditioning} only refer to integer supplies when decomposing a flow into flow paths, and such decompositions still exist with real valued flows.
% the full version~\cite{knp-pgtp-19} of their paper for more details.}
The \emph{spread} (also called aspect ratio) of \(P\) is the ratio of the diameter of \(P\) to the smallest pairwise distance between points in \(P\).
As Khesin \emph{et al.} point out, one can reduce an instance with unbounded spread but bounded integral supplies to the case of bounded spread to get a \((1 + \eps)\)-approximation running in \(n \eps^{-O(d)} \log^{O(d)}{U} \log^2{n}\) time, generalizing a near-linear time \((1 + \eps)\)-approximation algorithm by Sharathkumar and Agarwal~\cite{raghvendra2020near} for the unit supply case.
The unit supply case is sometimes referred to as the \emph{geometric bipartite matching problem}. Agarwal and Sharathkumar~\cite{DBLP:conf/stoc/AgarwalS14} also describe a \emph{deterministic} \((1 / \eps)\)-approximation algorithm for geometric bipartite matching that runs in \(O(n^{1+\eps}\log{n})\) time.

Despite these successes, prior work still does not include a near-linear time \((1 + \eps)\)-approximation algorithm for the general case of arbitrary spread and real valued supplies.
Often, an algorithm designed for bounded spread cases can be extended to work with cases of arbitrary spread.
For example, one might substitute in \emph{compressed quadtrees}~\cite[Chapter 2]{har2011geometric} in places where the bounded spread algorithm uses standard (uncompressed) quadtrees.
This straightforward approach does not appear to work for the geometric transportation problem, however.
As detailed below, Khesin \emph{et al.}~\cite{khesin2021preconditioning} use a quadtree to build a sparse graph as part of a reduction to the minimum cost flow problem.
Both their running time and approximation analysis rely heavily on the tree having low depth when the spread is bounded.
Unfortunately, a compressed quadtree is only guaranteed to have small \emph{size};
the depth can still be linear in the number of leaves.
One may also try the strategy of separating out groups of points~\(P'\) that are much closer to each other than to the rest of the point set~\(P\), routing as much supply as possible within~\(P'\), and then satisfying what remains of the supplies in~\(P'\) by treating~\(P'\) as a single point.
In fact, the result described below does employ a variant of this strategy (see Section~\ref{sec:spanner-decomposition}).
However, the simplified instances of the problem one gets using this strategy still yield compressed quadtrees of very high depth.

\subsection{Our results and approach}

We describe a randomized \((1 + \eps)\)-approximation algorithm for the geometric transportation problem that runs in near-linear time irrespective of the spread of \(P\) or the supplies of its points.
Our specific result is spelled out in the following theorem.
We say an event occurs with \emph{high probability} if it occurs with probability at least \(1 - 1 / n^c\) for some constant \(c\).
\begin{theorem}\label{theorem-result}
There exists a randomized algorithm that, given a set of \(n\) points \(P \in \R^d\) and a supply function \(\mu : P \to \R\), runs in time \(n \eps^{-O(d)} \log^{O(d)} n\) and with high probability returns a transportation map with cost at most \((1 + \eps) \cdot \cost(P, \mu)\).
\end{theorem}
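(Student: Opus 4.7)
The plan is to leverage Khesin \emph{et al.}'s~\cite{DBLP:conf/compgeom/KhesinNP19} $(1+\eps)$-approximation algorithm for bounded spread as a black box and reduce the arbitrary-spread case to a recursively structured collection of bounded-spread subproblems via the spanner decomposition foreshadowed in Section~\ref{sec:spanner-decomposition}. First I would build a sparse Euclidean $(1+\eps)$-spanner on $P$ and, using a randomly shifted grid, identify \emph{tight clusters}: subsets $P' \subseteq P$ whose internal diameter is much smaller than their minimum distance to $P \setminus P'$. These are precisely the features that force a quadtree on $P$ to have super-logarithmic depth. Within each tight cluster I would route as much supply as possible internally by recursively solving a smaller instance, collapse $P'$ to a single representative bearing the residual supply, and pass the collapsed instance up to the next level.

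At each level of recursion the subproblem is restricted to points whose pairwise distances span only a polylogarithmic factor, so the Khesin \emph{et al.} bounded-spread algorithm can be applied with spread $\Polylog{n}$, producing a running time of $O(n_\ell \eps^{-O(d)} \log^{O(d)} n)$ on a subproblem of size $n_\ell$. Since each point participates in at most $O(\log n)$ levels of the decomposition, the total work telescopes to $O(n \eps^{-O(d)} \log^{O(d)} n)$. The randomization of the grid shift is essentially the only source of randomness, and a union bound across the $O(\log n)$ levels yields the high-probability guarantee on both the running time and the approximation quality.

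For the approximation analysis I would control two error sources. First, each recursive call incurs a multiplicative $(1+\eps')$ factor for some $\eps' = \Theta(\eps / \log n)$, so the product over $O(\log n)$ levels still gives an overall $(1 + \eps)$ bound. Second, collapsing a tight cluster $P'$ to a single representative forces any unit of supply crossing the boundary of $P'$ to detour through that representative, incurring at most $\mathrm{diam}(P')$ per crossing. I would charge this detour to the optimum: since $P'$ is well separated from $P \setminus P'$ in the randomly shifted grid, any optimal transport arc entering or leaving $P'$ has length at least $\Omega(\mathrm{diam}(P') / \eps)$ in expectation, so the detour is at most an $\eps$-fraction of the crossing cost the optimum already pays.

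The principal obstacle is making this charging argument hold \emph{simultaneously} across all $O(\log n)$ levels of recursion while keeping the expected overhead within a $(1 + \eps)$ factor of the true optimum; a naive cluster boundary loses a constant factor per level, which would blow up as $2^{\Theta(\log n)}$. Handling real-valued supplies is an additional subtlety, since it precludes any charging scheme that relies on the granularity of a minimum unit of flow, so the analysis must bound costs directly in continuous Wasserstein terms. Finally, although each subproblem has bounded spread, the compressed quadtree used internally by Khesin \emph{et al.} may still be deep in isolation, so it may be necessary to modify their algorithm so that its cost depends only on the spread of the current subproblem's point set rather than on any global quadtree depth.
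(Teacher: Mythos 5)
Your plan founders on its central claim: that after extracting tight clusters, ``each level of recursion \ldots is restricted to points whose pairwise distances span only a polylogarithmic factor.'' This is false, and it is exactly the obstruction the present paper is built to overcome. Consider points placed at geometrically growing positions, e.g.\ $p_i$ at distance roughly $c^{i}$ from the origin for $i = 1, \dots, n$ and any constant $c > 1$. No subset of these points has internal diameter much smaller than its distance to the rest (every candidate cluster is only separated from its nearest outside point by a constant factor relative to its own diameter), so your decomposition produces no tight clusters at all; yet the spread of the instance is $c^{\Omega(n)}$. Hence you cannot invoke Khesin \emph{et al.}\ with spread $\Polylog{n}$; their running time carries a $\log^{O(d)}(\spread(P))$ factor, which here is polynomial in $n$ (and their $O(\eps_0 \log \spread)$ distortion forces $\eps_0 = \eps/\log\spread$, inflating the $\eps^{-O(d)}$ term as well). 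The paper states this explicitly in the introduction: separating out well-separated groups and collapsing them ``still yield[s] compressed quadtrees of very high depth.'' The cluster-collapsing idea does survive in the paper, but only in Section~\ref{sec:spanner-decomposition}, as a way to split the min-cost flow computation on the already-built sparse graph into pieces indexed by simple sub-quadtrees --- not as a reduction to bounded-spread instances.

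Because the black-box reduction is unavailable, the real work of the paper is precisely what your proposal defers: a conditionally-compressed quadtree with independently shifted subtrees and moats around the points, so that despite possibly linear tree depth the expected distance distortion is $O(\eps_0 \log n)$ (Lemmas~\ref{lm:tree} and~\ref{lm:espd}); a preconditioner whose condition number is $O(\eps_0^{-1}\log(n/\eps_0))$ rather than proportional to tree depth (Lemmas~\ref{lm:below} and~\ref{lm:above}), with linear-time application of $BA$ and $(BA)^T$ (Lemma~\ref{lm:cond}) so Sherman's framework runs in near-linear time; and a bulk flow-rerouting procedure using prefix split trees to extract the transportation map (Lemma~\ref{lm:recover}). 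Your own closing caveats (the multiplicative loss compounding over levels, real-valued supplies, and the possibility that the subproblems are still ``deep in isolation'') gesture at these difficulties, but the proposal as written does not supply the ideas needed to resolve them, and its first step already fails on instances with no tight clusters and huge spread.
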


At a high level, our algorithm follows the approach laid out by Khesin \emph{et al.}~\cite{khesin2021preconditioning} for the bounded spread case.
However, removing the running time's dependency on the spread introduces fundamental and technical issues to nearly every step in their approach.

Let \(\eps_0\) be a function of \(\eps\) and \(n\) to be specified later.
Taking a cue from prior work on geometric transportation and its specializations~\cite{raghvendra2020near,DBLP:conf/stoc/AndoniNOY14},
Khesin \emph{et al.}'s algorithm begins by building a random sparse graph over \(n \eps_0^{-O(d)} \log \spread(P)\) vertices including the points in \(P\).
In expectation, the shortest path distance between any pair of points in \(P\) is maintained up to an \(O(\eps_0 \log \spread(P))\) factor, so computing a transportation map is done by setting \(\eps_0\) to \(O(\eps / (\log \spread(P))\) and running a minimum cost flow algorithm on the sparse graph.

The graph is constructed by first building a randomly shifted quadtree over \(P\).
The quadtree is constructed by surrounding \(P\) with an axis-aligned box called a cell, partitioning it into \(2^d\) equal sized child cells, and recursively building a quadtree in each child cell;
the whole tree has depth \(\log \spread(P)\).
After building the quadtree, they add \(\eps_0^d\) Steiner vertices within each cell along with a carefully selected set of edges.
While other methods are known for constructing such a sparse graph even without Steiner vertices~\cite{DBLP:conf/soda/CallahanK93} and we can directly approximate the transporation cost on a sparse Euclidean spanner in near linear time through Sherman's framework~\cite{DBLP:conf/soda/Sherman17}, the hierarchical structure of Khesin \emph{et al.}'s construction is necessary for extracting the transportation map after a minimum cost flow is computed.
Observe that not only is the quadtree's size dependent on \(\spread(P)\), but so is the number of Steiner vertices added to each cell.

As suggested earlier, the natural approach for reducing the quadtree's size is to remove subtrees containing no members of \(P\) and to \emph{compress} the tree by replacing each maximal path of cells with exactly one non-empty child each with a single link to the lowest cell in the path.
This approach does result in a quadtree of size \(O(n)\), but its depth could also be as large as \(\Omega(n)\).
This large depth introduces many issues, the worst of which is that standard analyses only guarantee
point-to-point distances are maintained up to an \(O(\eps_0 n)\) factor.
We cannot afford to set \(\eps_0\) to \(\eps / n\), because the sparse graph would have
\(\Omega(n^d)\) vertices!

The solution to avoiding such a large increase in expected distances is to use the idea of \emph{moats} around the points as done in the almost-linear time constant factor approximation algorithm of Agarwal \emph{et al.}~\cite{DBLP:conf/compgeom/AgarwalFPVX17}.
In short, we modify the quadtree construction so that, with high probability, all points are sufficiently far away from the boundary of every quadtree cell they appear in.
Assuming this condition holds, there are only a limited number of quadtree ``levels'' at which a pair of points can be separated, and we use this fact to show distances increase by only an \(O(\eps_0 \log n)\) factor in expectation.
It turns out modifying the quadtree construction correctly is a surprisingly subtle task.
Guaranteeing the moats are avoided potentially requires us to perform independent random shifts at several places throughout the quadtree.
However, we need to be selective with where the independent shifts occur so that we can successfully analyze the expected distances between points in the sparse graph.

The second stage of Khesin \emph{et al.}'s~\cite{khesin2021preconditioning} algorithm solves the minimum cost flow problem in the sparse graph using a framework of Sherman~\cite{DBLP:conf/soda/Sherman17}.
First, they encode the minimum cost flow problem as finding a flow vector \(f\) of minimum cost subject to linear constraints \(A f = b\) where \(A\) is the vertex-edge incidence matrix and \(b\) is a supply vector (not necessarily equal to \(\mu\)).
Sherman's framework involves repeatedly finding flows \(f\) of approximately optimal cost that approximately satisfy such constraints.
Each iteration of this algorithm requires an application of \(A\) and \(A^T\) to a pair of vectors, and
the number of iterations needed in this approach is polynomial in the \emph{condition number} of \(A\).
Unfortunately, \(A\) may not be well-conditioned, so Khesin \emph{et al.} describe a \emph{preconditioner} matrix \(B\) such that \(BA\) has low condition number and is still sparse.
They proceed to use Sherman's framework under the equivalent constraints \(BA f = Bb\).

One interpretation of Khesin \emph{et al.}'s~\cite{khesin2021preconditioning} preconditioner is that it describes a way to charge each Steiner vertex an amount based on the supply of ``descendent'' vertices below it so that the sum of charges bound the cost of an optimal flow from below.
Consequently, both the number of non-zero entries in each column of \(B\) and the condition number of \(B\) are proportional to the quadtree's depth.

The high depth of our quadtree again appears to cause issues.
However, our use of moats implies additional structure to the sparse graph that we can take
advantage of.
Our preconditioner \(B\) is based on essentially the same charging scheme as Khesin \emph{et al.}, but thanks to the moats, we prove the condition number remains proportional to \(O(\eps_0^{-1} \log (n / \eps_0))\) instead of the quadtree depth.
This charging scheme still results in a preconditioner \(B\) that \emph{is not} sparse, so a naive implementation of Sherman's~\cite{DBLP:conf/soda/Sherman17} framework may take quadratic time per iteration.
To address this issue, we describe a pair of algorithms based on the hierarchical structure of the graph that let us apply both \(BA\) and its transpose in only linear time.

The final stage of the algorithm is the extraction of an approximately minimum cost transportation map from an approximately minimum cost flow in the sparse graph.
Khesin \emph{et al.}'s~\cite{khesin2021preconditioning}'s original procedure modifies the graph's flow by iteratively reassigning flow to travel directly from input points to each of their many ancestor Steiner vertices or vice versa.
We use binary search tree based data structures in a novel way to do flow reassignments in bulk, allowing us to extract the transportation map in time near-linear in the graph size.

Our result relies on a computation model supporting basic combinatoric operations over real numbers
including addition, subtraction, multiplication, division, and comparisons.
In particular, we do not rely on the floor function.%
\footnote{A previous version~\cite{fl-ntasg-20} of this paper relied on a slightly stronger model of
computation that allowed one to quickly compute the location of points within arbitrary grids (see
Bern \etal~\cite{bet-pcqqt-99} and Har-Peled~\cite[Chapter 2]{har2011geometric}).
The current version uses a slightly modified algorithm and more careful analysis to avoid needing
this operation.}
Our results (and those of Khesin \emph{et al.}~\cite{khesin2021preconditioning}) can be extended to work with any \(L_p\) metric instead of just Euclidean distance.
The rest of the paper proceeds as follows.
We describe our sparse graph construction and describe the reduction to minimum cost flow in Section~\ref{sec:spanner}.
We describe our preconditioner and its use Section~\ref{sec:sol}.
Finally, we describe how to extract the approximately optimal transportation map from a flow on the sparse graph in Section~\ref{sec:recover}.

\section{Reduction to minimum cost flow in a sparse graph}\label{sec:spanner}
In this section, we present a way to build a sparse graph~\(G^* = (V^*, E^*)\) based on \(P\) and reduce the transportation problem to finding a minimum cost flow in this sparse graph.
Similar to the one presented by Khesin \etal~\cite{khesin2021preconditioning}, our sparse graph \(G^*\) is based on a randomly shifted quadtree whose cells have been subdivided into smaller \emph{subcells}.
However, the quadtree we use is compressed under certain conditions to guarantee the number of nodes in it is nearly linear in \(n\).
We also independently shift subtrees under these compresion points so we may guarantee a low
expected distortion for point-to-point distance.
%Besides, we present a new way to analyze the distortion of the spanner.\note{Hey Kyle, I am out for lunch, please let me know if we need to change the \(p,q\) notation in this section to \(u,v\), I can do it.}

\subsection{Construction of the sparse graph}

%% end of comment

Our sparse graph construction begins by building a variant of the compressed quadtree on \(P\) we call a \EMPH{conditionally-compressed quadtree}.
Let \(T^*\) denote this tree.
Let \(\Box_P\) be the minimum bounding square of \(P\).
We fix an \(\eps_0 = O(\eps / \log n)\) such that \(1 / \eps_0\) is a power of \(2\). % Rechecked this part, feel like power of 2 is necessary
Suppose the side length of \(\square_P\) is \(\Delta^*\).
Let \({\square}\) be a square of side length \(3\Delta^*\) such that \(\square_P\) and \({\square}\) are concentric.
We shift \({\square}\) by a vector chosen uniformly at random from \([0,\Delta^*)^d\). See Figure~\ref{fig:quadtree}, left.

\begin{figure}
\begin{minipage}{0.4\textwidth}
\begin{center}
    \includegraphics[width=0.8\textwidth]{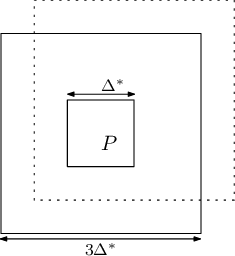}
\end{center}
\end{minipage}\hfill
\begin{minipage}{0.5\textwidth}
\begin{center}
 \includegraphics[width=0.8\textwidth]{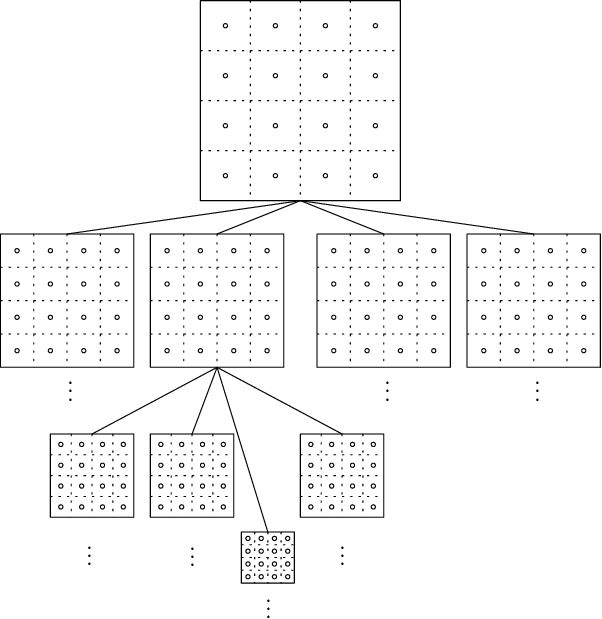}
\end{center}
\end{minipage}
\caption{Left: Randomly shifting a box around \(P\). Right: The quadtree cells form a hierarchy. Each cell is partitioned into \(\eps_0^{-d}\) sub cells, and each subcell has a single net point at its center.}
\label{fig:quadtree}
\end{figure}

%% Commented part in WSPD version.
\begin{comment}
Each node of \(T^*\) is a square cell in \(\R^d\).
Set \(\Box\) to be the root of \(T^*\), and let \(z\) be the first element in \(Z'\).
We recursively process each cell \(C\) as follows.
Suppose \(C\) has side length \(\Delta\) and the subset of \(P\) in \(C\) is \(P'\). Let \(\Delta_{P'}\) be the side length of the minimum bounding square \(\Box_{P'}\) of \(P'\).
\end{comment} %% end of comment

Each node of \(T^*\) is a square cell in \(\mathbb{R}^d\).
Set \({\square}\) to be the root of \(T^*\).
We recursively process each cell \(C\) as follows. Suppose \(C\) has side length \(\Delta\) and the subset of \(P\) in \(C\) is \(P'\).
Let \(\Delta_{P'}\) be the side length of the minimum bounding square \(\square_{P'}\) of \(P'\).

\begin{enumerate}[label=\arabic*)]
\item If \(|P'|=1\), then~\(C\) is a leaf node.
\item If \(|P'|>1\) and \(\Delta_{P'}<\frac{\eps_0\Delta}{3n^8}\), we find the minimum bounding square \(\square_{P'}\) of \(P'\).
  Let \(\Delta_{P'}\) be the side length of \(\square_{P'}\).
  We recursively build a conditionally-compressed quadtree over \(P'\) with an independently shifted root square \(\square'\) with side length \(3\Delta_{P'}\) that is concentric to \(\square_{P'}\) before the shift.
  We connect the root of this sub-quadtree to \(T^*\) as a child of \(C\).
\item If \(|P'|>1\) and \(\Delta_{P'}\ge \frac{\eps_0\Delta}{3n^8}\), we evenly divide \(C\) into \(2^d\) squares in \(\mathbb{R}^d\) each of side length \(\frac{\Delta}{2}\), and make each square that contains at least one point of \(P'\) a child cell of \(C\).
\end{enumerate}

%% Commented part in WSPD version.
\begin{comment}
\begin{enumerate}[label=\arabic*)]
\item If \(|P'|=1\), then~\(C\) is a leaf node.
\item If \(|P'|>1\) and \(\Delta_{P'} < \frac{z\eps_0}{3\sqrt{d}}\), we find the minimum bounding square \(\Box_{P'}\) of \(P'\).
Let \(\Delta_{P'}\) be the side length of \(\Box_{P'}\).
We recursively build a conditionally-compressed quadtree over~\(P'\) with an independently shifted root square \(\Box'\) with side length \(3\Delta_{P'}\) that is concentric to \(\Box_{P'}\) before the shift.
We connect the root of this sub-quadtree to \(T^*\) as a child of \(C\).
We update the value of \(z\) for this recursive construction to largest \(z'\in Z'\) such that
\(z' \leq 3\sqrt{d}\Delta_{P'}\).
This value can be found via binary search over~\(Z'\).
\item If \(|P'|>1\) and \(\Delta_{P'} \geq \frac{z\eps_0}{3\sqrt{d}}\), we do the following.
Let~\(x\) be the largest integer such that the grid with cell side length~\(\Delta \cdot 2^{-x}\) aligned with~\(C\) contains~\(P'\) within a single cell~\(C'\).
Let~\(\Delta'\) be the side length of~\(C'\).
\begin{enumerate}[label=\alph*)]
\item If \(\Delta' < \frac{\Delta \eps_0}{n^2}\), we connect~\(C'\) as the sole child of~\(C\).
\item Otherwise, we evenly divide \(C\) into \(2^d\) squares in \(\mathbb{R}^d\) each of side length \(\frac{\Delta}{2}\), and make each square that contains at least one point of~\(P'\) a child cell of \(C\).
\end{enumerate}
\end{comment} %% end of comment

\begin{lemma}
\label{lm:construction}
Let \(m\) be an upper bound on the number of nodes in~\(T^*\).
Conditionally-compressed quadtree~\(T^*\) can be constructed in \(O(m + n \log n)\) time.
\end{lemma}
\begin{proof}
  Suppose we are processing a cell \(C\) containing point subset \(P'\).
  Following standard practice~\cite{DBLP:conf/soda/CallahanK93}, we assume access to \(d\) doubly-linked lists containing the points of \(P'\).
  The points in each list are sorted by distinct choices of one of their \(d\) coordinates.

  We now describe how to process \(C\).
  We determine if \(|P'|=1\) in constant time.
  If so, we stop processing \(C\) and discard its data structures.
  Otherwise, we use the lists to determine \(\Delta_{P'}\) and \(\square_{P'}\) in \(O(1)\) time.
  If \(\Delta_{P'}<\frac{\eps_0\Delta}{3n^8}\), we pass along the lists for \(C\) to the recursive quadtree constructions as Rule 2 describes.

  Suppose \(\Delta_{P'}\ge \frac{\eps_0\Delta}{3n^8}\) and Rule 3 applies.
  We compute the point subsets and their lists going into each child cell by splitting \(P'\) one
  dimension at a time.
  Specifically, for each dimension, we search the relevant linked list from both ends
  simultaneously for the position of the split.
  In doing so, we find the right position to split in time proportional to the less populated side
  of the split.
  In the same amount of time, we also perform individual deletions and insertions to make the
  other linked lists for the points on the less populated side;
  what remains of the original linked lists are the points going to the more populated side.
  Eventually, we pass along the lists we construct when computing subtrees for children of \(C\).

We spend constant time per point moving to a new list data structure when applying Rule 3. Recall the most populated child cell inherits the data structure from its parent. Therefore, whenever a point moves to a new data structure, the number of points in its cell drops by a factor of at least~\(2\). So every point gets touched at most \(\log{n}\) times and we spend \(O(m+n\log{n})\) time total implementing Rule 3.
\end{proof}

We define a type of sub-quadtree of \(T^*\).
A \EMPH{simple sub-quadtree} is a sub-quadtree consisting of a cell \(C\) that either is the root of \(T^*\) or is randomly shifted independently of its parent along with a maximal set of descendent cells of \(C\) that were \emph{not} shifted independently of \(C\) (i.e., Rule 2 was never applied to create descendent cells of \(C\) in the sub-quadtree).

For every cell \(C\) in \(T^*\), we perform a secondary subdivision on \(C\).
Let \(\Delta_C\) denote the side length of \(C\).
We divide \(C\) into \(\eps_0^{-d}\) square regions with equal side length \(\eps_0\Delta_C\).
If a sub-region of \(C\) contains a point \(p\in P\), we say it is a \EMPH{subcell} \(\Tilde{C}\) of \(C\) and we use \(C^+\) to denote the set of subcells of \(C\). Again, see Figure~\ref{fig:quadtree}.

Utilizing an idea of Agarwal~\emph{et al.}~\cite{DBLP:conf/compgeom/AgarwalFPVX17}, we define the \EMPH{moat of size \(h\)} around a point \(p\) as an axis-parallel square of side length \(h\) around \(p\).
Consider a randomly shifted grid with cells of side length \(\Delta\).
The probability of any of the grid lines hitting a moat of size \(\frac{2\Delta}{n^4}\) around any points \(p\in P\) is at most \(\frac{2\Delta}{n^4}\cdot n\cdot \frac{d}{\Delta}=O(\frac{1}{n^3})\).

\begin{lemma}\label{lm:tree}
  With probability at least \(1-O((1/n)\log(n/\eps_0))\), the conditionally-compressed quadtree \(T^*\) has the following properties:
  \begin{enumerate}[label=\arabic*.]
  \item The total number of cells is \(O(n\log(n/\eps_0))\).
  \item Suppose cell \(C\) with side length \(\Delta_C\) contains \(p\in P\) and let \(\Tilde{C}\) be the subcell of \(C\) that contains \(p\).
    Then \(p\) is at least \(\frac{\Delta_C}{n^4}\) distance away from any side of \(C\) and is at least \(\frac{\Delta_C\eps_0}{n^4}\) distance away from any side of \(\Tilde{C}\).
    In other words, the moats of \(p\) with respect to the uniform grids containing \(C\) and \(\Tilde{C}\) as cells do not touch the grid lines.
  \item Let \(T'\) be any simple sub-quadtree of \(T^*\), and let \(C'\) be a child cell of some leaf \(C\) of \(T'\).
    Cell \(C'\) lies entirely within a \emph{sub}cell of \(C\).
  \end{enumerate}
\end{lemma}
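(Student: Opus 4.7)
The plan is to establish each of the three properties of Lemma~\ref{lm:tree} separately, relying on the moat-probability calculation displayed just before the lemma for the random parts and on combinatorics of the quadtree structure for the deterministic parts.

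For Property~1, I would first bound the length of every maximal chain of single-child Rule~3 subdivisions inside a simple sub-quadtree by \(O(\log(n/\eps_0))\). Along such a chain the point set \(P'\) stays fixed, so the cell side halves at each step while \(\Delta_{P'}\) does not; the chain therefore ends as soon as \(\Delta/2 < \Delta_{P'}\) and the bounding square of \(P'\) can no longer fit in a single child. Because Rule~2 did not trigger at the top of the chain, the starting ratio \(\Delta/\Delta_{P'}\) is at most \(3n^8/\eps_0\), which bounds the chain length as claimed. A simple sub-quadtree whose ``leaves'' (single-point leaves of \(T^*\) together with Rule~2 compression points) number \(m'\) has at most \(m'-1\) branching nodes, so it contributes \(O(m'\log(n/\eps_0))\) cells in total. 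Contracting each simple sub-quadtree to one super-node produces a super-tree in which every non-root super-node arose from a Rule~2 compression and so is rooted at a cell with \(|P'|\geq 2\); a standard compressed-quadtree charging argument (by induction on \(|P_S|\), using that sibling super-children of a super-node \(S\) correspond to disjoint subsets of \(P_S\) of size at least two) bounds the total number of super-nodes by \(n-1\), hence \(\sum m'_i = O(n)\). Multiplying gives the first property.

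For Property~2, I apply the moat calculation quoted just before the lemma once per level of each simple sub-quadtree. Every level of a simple sub-quadtree is a uniform grid whose offset is determined by the one random shift at that sub-quadtree's root, so the probability some grid line passes within \(\Delta/n^4\) of any point of \(P\) is \(O(1/n^3)\); because the subcell grid is derived deterministically from this grid, the analogous statement for moats of size \(\eps_0\Delta/n^4\) fails with probability \(O(1/n^3)\) as well. A union bound over the \(O(\log(n/\eps_0))\) levels of each of the \(O(n)\) simple sub-quadtrees (using the cell count just established) yields the claimed overall failure probability. Property~3 then follows: a leaf \(C\) of a simple sub-quadtree has a child \(C'\) in \(T^*\) only when Rule~2 fires at \(C\), giving a shifted square \(C'\) of side \(3\Delta_{P'} < \eps_0\Delta/n^8\) that still contains the bounding square of \(P'\). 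By Property~2 every point of \(P'\) lies at distance at least \(\eps_0\Delta/n^4\) from every side of its subcell in \(C\), and since \(\Delta_{P'} < \eps_0\Delta/(3n^8) \ll \eps_0\Delta/n^4\), all of \(P'\) must sit in a single subcell; because the whole side of \(C'\) is smaller than that same distance, \(C'\) itself fits inside that subcell.

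The main obstacle is the chain accounting in Property~1: I have to verify that \(\Delta/\Delta_{P'}\) really is bounded by \(3n^8/\eps_0\) at the top of every chain, and that a branching step cannot raise a child's ratio above the Rule~2 threshold without Rule~2 immediately triggering there (cutting off the chain before it begins). Once this is settled, the super-tree charging and the two moat calculations are essentially routine.
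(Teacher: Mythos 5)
Your proposal follows essentially the same route as the paper: bound the single-child chains via the Rule~2 threshold to get the cell count, union-bound the moat events over the random grids to get Property~2, and deduce Property~3 deterministically from Property~2 together with the fact that a Rule~2 child has side at most \(\eps_0\Delta/n^8\). Your treatment of Properties~1 and~3 is in fact more detailed than the paper's (the super-tree charging and the direct argument that \(C'\) itself, not just the points, fits in one subcell are both fine, and your worry about chains starting after a branching step resolves itself: if a child's ratio exceeds the threshold, Rule~2 fires there and the chain simply has length zero). One intermediate claim in your Property~2 accounting is wrong as stated, though: a simple sub-quadtree does \emph{not} have \(O(\log(n/\eps_0))\) levels --- only its maximal single-child chains do, and its depth can be \(\Theta(n\log(n/\eps_0))\); this is exactly the point the paper stresses when it says simple sub-quadtrees may still have linear depth. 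The fix is the one you parenthetically gesture at: the union bound should run over the \emph{total} number of levels across all simple sub-quadtrees, which is at most the total cell count \(O(n\log(n/\eps_0))\) from Property~1, giving failure probability \(O((1/n^2)\log(n/\eps_0))\), comfortably within the lemma's \(O((1/n)\log(n/\eps_0))\). Be aware also that the collection of grids you union over is itself random (which sub-quadtrees exist, and over which point sets, depends on the shifts higher up); the paper handles this by conditioning and inducting over the hierarchy of simple sub-quadtrees, and your flat union bound should be phrased as conditioning on the ancestral shifts and using a deterministic bound on each sub-quadtree's possible depth. With those two repairs your argument is sound and, if anything, yields a slightly stronger probability bound than the paper's induction.
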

\begin{proof}
  The condition to trigger Rule 2 guarantees every path of descendent cells with one child each has length \(O(\log(n/\eps_0))\).
  We get Property 1.
  
  Let \(T_0\) be the simple sub-quadtree containing the root cell of \(T^*\).
  With Property 1, the number of cells in \(T_0\) is \(O(n\log{(n/\eps_0)})\) and hence the smallest cell in \(T_0\) lies \(O(n\log{(n/\eps_0)})\) quadtree levels down.
  Therefore, at most \(O(n\log{(n/\eps_0)})\) shifted grids in \(\mathbb{R}^d\) determine the boundaries of \(T_0\)'s (sub)cells.
  We see Property 2 is violated for at least one cell in \(T_0\) with probability at most \(c\cdot \frac{n}{n^3}\cdot \log{\frac{n}{\eps_0}}\) for some constant \(c\).
  
  Assume from here on that Property 2 holds for all cells in \(T_0\).
  Let \(C\) be any leaf cell with side length \(\Delta_C\) of \(T_0\) such that \(C\) has some descendent cell in \(T^*\).
  Let \(C'\) be a child cell with side length \(\Delta_{C'}\) of \(C\) and \(T'\) be the simple sub-quadtree rooted at \(C'\).
  Then we have \(\Delta_{C'}<\frac{\eps_0\Delta_C}{n^8}\) by Rule 2.
  Moreover, all points in \(T'\) are at distance at least \(\frac{\eps_0\Delta_C}{n^4}\) away from subcell boundaries of \(C\). Any random shift of \(T'\) using a shift vector in \([0, \frac{\Delta_{C'}}{3})^d\) will keep all points in \(T'\) inside the same subcell of \(C\).

  Let \(\{T_1,T_2,\dots\}\) denote the distinct sub-quadtrees, each consisting of a child of a leaf in \(T_0\) and all the child's desendents in \(T^*\).
  For each \(T_i\), let \(n_i\) be the number of points over which \(T_i\) is built.
  We have \(n_i\le n-1\) for each \(i\) and \(\sum_{i}n_i\le n\).
  Inductively assume Properties 2 and 3 fail to hold for \(T_i\) with probability at most
  \(c\cdot \frac{n^2_i}{n^3}\cdot \log\frac{n}{\eps_0}\le
  c\cdot\frac{(n-1)n_i}{n^3}\cdot\log\frac{n}{\eps_0}\).
  (For each \(T_i\), on the path from its root to each of its leaves in \(T^*\), at most \(n_i\)
  distinct simple sub-quadtrees exists. Each of those sub-quadtrees has at most
  \(n_i\log(n/\eps_0)\) levels.
  Summing them all, we may assume Properties 2 and 3 fail to hold for \(T_i\) with probability
  at most \(c\cdot \frac{n^2_i}{n^3}\cdot \log\frac{n}{\eps_0}\)).
  Taking a union bound, the probability of Properties 2 and 3 failing to hold for either \(T_0\) or any \(T_i\) is at most \(c\cdot \frac{n^2}{n^3}\cdot \log\frac{n}{\eps_0}=c\cdot \frac{1}{n}\cdot \log\frac{n}{\eps_0}\).
  (Note that if an independent random shift is not applied for each simple sub-quadtree in \(T^*\), for the root cell of some \(T_i\) with its side length much smaller than the side length of the root cell of \(T_0\), there are too many potential distinct grid sizes for the next cell down in a traditional compressed quadtree. Our union bound over all potential grid sizes would fail).
\end{proof}

We assume from here on that the properties described above do hold, but \(T^*\) is still randomly constructed conditional on those properties.
We now build the sparse graph \(G^*\) based on the decomposition.

For every cell \(C\), we add a \EMPH{net point} \(\nu\) at the center of every subcell of \(C\), and use \(N_{\Tilde{C}}\) to denote the net point of a subcell \(\Tilde{C}\).
We add \(O(\eps_0^{-2d})\) edges to build a clique among net points of subcells in \(C^+\).
Furthermore, if \(C\) has a parent cell \(C^p\), for each \(\Tilde{C}\in C^+\), there exists a \(\Tilde{C}^{p}\in C^{p^+}\) such that \(\Tilde{C}\) is totally contained in \(\Tilde{C}^{p}\), because \(1 / \eps_0\) is power of \(2\).
We add an edge connecting \(N_{\Tilde{C}^{p}}\) with \(N_{\Tilde{C}}\). We say \(\Tilde{C}^{p}\) is the \EMPH{parent subcell} of \(\Tilde{C}\) and \(N_{\Tilde{C}^{p}}\) is the \EMPH{parent net point of} \(N_{\Tilde{C}}\).
\EMPH{Children subcells} and \EMPH{children net points} are defined analogously.
Edges are weighted by the Euclidean distance of their endpoints.
Let \(\tilde{C}(p)\) denote the smallest subcell containing \(p\).
As a last step, for every  point \(p\in P\), we add an edge connecting \(p\) to \(N_{\tilde{C}(p)}\).

Let \(V^*\) be the union of \(P\) and the set of all net points we just added, and let \(E^*\) be the set of edges we added above. In short, \(V^*=\cup_{C\in T}\{{N_{\Tilde{C}}:\Tilde{C}\in C^+\}} \cup P\) and \(E^*=\cup_{C\in T^*}\{\{uv: u, v\in \{N_{\Tilde{C}}:\Tilde{C}\in C^+\}, u\ne v\}\cup\{N_{\Tilde{C}} N_{\Tilde{C}^p}, \Tilde{C}\in C^+\}\} \cup \Set{p N_{\tilde{C}(p)}, p \in P}\).
The sparse graph upon which we solve minimum cost flow is denoted \(G^*=(V^*, E^*)\).

\begin{lemma}\label{lm:espd}
  Conditioned on Property 2 of Lemma~\ref{lm:tree}, the expected distance between any pair \(p,q\in
  P\) in \(G^*\) is at most \(\Paren{1+O(\eps_0\log n)}||p-q||_2\).
\end{lemma}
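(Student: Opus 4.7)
The plan is to exhibit an explicit $p$-to-$q$ path in $G^*$ and bound its expected length. The natural path threads through the net points of the subcells containing $p$ and $q$: ascend from the subcell of $p$ in its leaf of $T^*$ through the net points of successive containing subcells, cross over at the lowest ancestor cell $C^*$ where the subcell containing $p$'s chain differs from the subcell containing $q$'s chain, then descend symmetrically to $q$. Each ascent or descent step inside a cell of side $\Delta$ contributes at most $O(\sqrt d\,\eps_0\Delta)$, and the crossing at $C^*$ contributes at most $O(\sqrt d\,\Delta_{C^*})$. Summing the geometric ascent and descent contributions inside a single simple sub-quadtree is dominated by the top level, so the total deterministic cost is $O(\eps_0\Delta_{C^*}+\sqrt d\,\Delta_{C^*})$ per simple sub-quadtree in which $p$ and $q$ diverge.

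The first step after writing down the path is to decompose the expectation across the sequence of simple sub-quadtrees on the $p$-to-$q$ route. Property~3 of Lemma~\ref{lm:tree} is crucial here: whenever $p$ and $q$ pass together through the leaf of some simple sub-quadtree into a compressed child, they lie in the same subcell of that leaf, so the descent in that leaf contributes zero detour cost. Thus the expensive levels are only those \emph{within} the unique simple sub-quadtree $T_s$ whose random shift determines the first level $i^*$ at which $p$ and $q$ are separated into different subcells; all higher simple sub-quadtrees contribute no detour, and all lower simple sub-quadtrees are irrelevant because they lie below the LCA.

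Within the simple sub-quadtree $T_s$, which has depth $O(\log(n/\eps_0))$ by Property~1, I would apply the classical randomly-shifted-grid argument: for each level $i$ with side length $\Delta_i$, the probability that the shift places $p$ and $q$ in distinct subcells at level $i$ (conditioned on them being together at level $i-1$) is at most $O\!\left(d\,\|p-q\|_2/(\eps_0\Delta_i)\right)$, and the contribution given that event is $O(\Delta_i)$. Summing $\Pr[\text{LCA at level }i]\cdot O(\Delta_i)$ over the relevant levels, the levels where $\Delta_i\gg\|p-q\|_2/\eps_0$ are dominated by the moat condition (Property~2), which with conditioning rules out separation at such high levels for a shift vector that avoided all $p,q$ moats. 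The remaining levels form a window of size $O(\log n)$ around $\Delta_i\approx\|p-q\|_2$, each contributing $O(\eps_0\|p-q\|_2)$, giving a total expected detour of $O(\eps_0\log n)\,\|p-q\|_2$. Adding the base crossing cost of $O(\|p-q\|_2)$ at the LCA yields the stated $(1+O(\eps_0\log n))\|p-q\|_2$ bound.

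The main obstacle will be justifying the level window cleanly. Specifically, I have to show that conditioning on Property~2 does not distort the random-shift probabilities by more than a constant factor (since conditioning only trims a $O(1/n^2)$ fraction of shift space per cell, this should be benign), and I need to verify that ``crossing'' contributions at the LCA are indeed $O(\|p-q\|_2)$ in expectation rather than $O(\Delta_{C^*})$. The latter requires exploiting that when $\Delta_{C^*}\gg\|p-q\|_2$ the separation event is correspondingly unlikely, making the expected contribution of the LCA crossing itself telescope into the same $O(\eps_0\log n)\|p-q\|_2$ bound. Accounting correctly for the interaction between the independent shifts across simple sub-quadtrees (so that one pair $(p,q)$'s analysis is not double-charged across sub-quadtrees) is the subtle bookkeeping I expect to take the most care.
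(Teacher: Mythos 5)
Your high-level plan is the same as the paper's: route \(p\) to \(q\) through the net-point chains, meet at the lowest common ancestor cell \(C(p,q)\), use the moat condition (Property 2) to confine \(\Delta_{C(p,q)}\) to the window \([\lambda/\sqrt d,\ n^4\lambda]\) (writing \(\lambda=\|p-q\|_2\)) of \(O(\log n)\) dyadic scales, bound the per-scale separation probability by \(O(\lambda/\Delta)\), and argue the conditioning only perturbs these probabilities mildly. The genuine gap is in your accounting of the crossing cost. You bound the crossing at \(C^*\) by the cell diameter \(O(\sqrt d\,\Delta_{C^*})\) and propose to rescue this via the unlikeliness of separation at large scales; but \(\sum_i \Pr[\text{LCA at scale }\Delta_i]\cdot O(\Delta_i)\le\sum_i O(\lambda/\Delta_i)\cdot O(\Delta_i)\) is \(\Theta(\lambda)\) per scale and hence \(\Theta(\lambda\log n)\) over the window, i.e.\ an \(O(\log n)\)-factor distortion rather than \(1+O(\eps_0\log n)\). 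Moreover, even pushing the expected crossing down to ``\(O(\lambda)\)'' would not suffice: the lemma requires the coefficient of \(\lambda\) to be exactly \(1\) plus an \(O(\eps_0\log n)\) error, so any constant-factor bound on the main term destroys the \((1+\eps)\)-type guarantee. The missing ingredient is simply the triangle inequality applied to the clique edge between the two subcell net points of \(C(p,q)\): each net point is the center of a subcell of side \(\eps_0\Delta_{C(p,q)}\) containing the respective input point, so that edge has length at most \(\lambda+\sqrt d\,\eps_0\Delta_{C(p,q)}\), the total detour is at most \(3\sqrt d\,\eps_0\Delta_{C(p,q)}\), and the whole lemma reduces to showing \(\mathbb{E}[\Delta_{C(p,q)}\mid\text{Property 2}]=O(\lambda\log n)\), which is exactly what your probability bounds then deliver.

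Two smaller points. First, your per-level arithmetic is inconsistent: a separation probability of \(O\bigl(d\lambda/(\eps_0\Delta_i)\bigr)\) (subcell granularity) multiplied by a contribution of \(O(\Delta_i)\) gives \(O(\lambda/\eps_0)\) per level, not the \(O(\eps_0\lambda)\) you claim; the correct pairing is cell-level separation probability \(O(d\lambda/\Delta_i)\) with detour \(O(\sqrt d\,\eps_0\Delta_i)\). Second, the conditioning step needs more care than ``trims an \(O(1/n^2)\) fraction of shift space per cell,'' because the shift determines which cells exist at all; the paper handles this by introducing an auxiliary event \(\event'\) about the highest common ancestor cell of side at most \(n^8\lambda\) (existence of a shift keeping \(P'\) away from its boundary), bounding the separation probability conditioned on \(\event'\) in terms of the admissible shift-interval lengths \(\ell_j\), and only then converting to conditioning on Property 2 by dividing by \(\probability[\event]\). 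Your instinct that this costs only a constant factor is right, but the argument as you sketch it is not yet a proof.
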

\begin{proof}
Let \(\lambda=||p-q||_2\), and let \(\distG(p,q)\) be the distance between \(p\) and \(q\) in
\(G^*\).
Points \(p\) and \(q\) must be connected through the net points of some cell containing both of them.
Let \(C(p,q)\) be the lowest common ancestor cell of \(p\) and \(q\).
Let \(N_{C(p,q)}(p)\) and \(N_{C(p,q)}(q)\) be the net points of the subcell of \(C(p,q)\) that contains \(p\) and \(q\), respectively. Then \(\distG(p,q)=\distG(p, N_{C(p,q)}(p))+\distG( N_{C(p,q)}(p),N_{C(p,q)}(q))+\distG(q, N_{C(p,q)}(q))\).
Value \(\distG(p, N_{C(p,q)}(p))\) is the distance from \(N_{C(p,q)}(p)\) to \(p\) through its descendent net points.
We have \(\sum_{i\ge 1}2^{-i}\sqrt{d}\eps_0\Delta_{C(p,q)}\le\sqrt{d}\eps_0\Delta_{C(p,q)}\), because subcell side lengths at least halve every level down in \(T^*\).
Similarly, %\linebreak
\(\distG(q, N_{C(p,q)}(q))\le\sqrt{d}\eps_0\Delta_{C(p,q)}\).
By the triangle inequality, \(\distG( N_{C(p,q)}(p),N_{C(p,q)}(q))\le||p-q||_2+||p-N_{C(p,q)}(p)||_2+||q-N_{C(p,q)}(q)||_2\le||p-q||_2+\sqrt{d}\eps_0\Delta_{C(p,q)}\).
Then we have \(\distG(p,q)\le||p-q||_2+3\sqrt{d}\eps_0\Delta_{C(p,q)}\).

We define the \textit{extra cost} to be \(\Phi_{p,q}=\distG(p, q)-||p-q||_2\).
We have \(\Phi_{p,q} \le 3\sqrt{d}\eps_0\Delta_{C(p,q)}\).
Let \(e\) be the event that Property 2 of Lemma~\ref{lm:tree} holds.
We have the conditional expectation of the extra cost \(\mathbb{E}(\Phi_{p,q} \mid e)\:\le\:
\mathbb{E}(3\sqrt{d}\eps_0\Delta_{C(p,q)} \mid e)\:\le3\sqrt{d}\eps_0\mathbb{E}(\Delta_{C(p,q)} \mid
e)\).

Let \(\Delta\) be any value in \([\frac{\lambda}{\sqrt{d}}, n^4\lambda]\).
Our goal is to provide an upper bound on the probability that \(\Delta_{C(p,q)}\in [\Delta,2\Delta)\)
conditioned on Property 2 of Lemma~\ref{lm:tree}.
As we will see later, we need not consider other possible values of \(\Delta\) given we condition on
that property.
To find our probability bound, we first bound the probability conditioned on a weaker event.
Let \(C'\) denote the \emph{highest} common ancestor cell of \(p,q\) in \(T^*\) with side length at
most \(n^8\lambda\).
Let \(\Delta'\) be its side length and let \(P'\) be the point set in \(C'\).
Let \(T\) be the simple sub-quadtree in which \(C'\) lies.
In each dimension, the shift of \(T\) in that dimension that determines \(C'\) constitutes a
continuous unique interval.
Let \(\ell_j\) denote the length of this interval for the \(j\)th dimension.
For our proof, we must consider different shifts of \(T\) that lead to our current choice of \(C'\),
and argue \(\mathbb{E}(\Delta_{C(p,q)})\) is not too large.
To do so, however, we must condition on the following event, denoted \(\event'\), that there exists a
shift of \(T\) such that no point of \(P'\) lies at distance less than \(\frac{\Delta'}{n^4}\) from
the sides of \(C'\).
Note that \(\event'\) is necessary \emph{but not sufficient} for Property 2 of Lemma~\ref{lm:tree} to
hold.
We now consider the probability of \(\Delta_{C(p,q)}\in [\Delta,2\Delta)\) conditioned on \(\event'\),
denoted by \(\probability\Brack{\Delta_{C(p,q)}\in [\Delta,2\Delta) \mid \event'}\).
%So we may consider as many shifts of \(T\) as possible, however, we are intentionally choosing
%\emph{not} to condition on Property 2 of Lemma~\ref{lm:tree} at this moment in the proof.

Let \(\Delta''=2^{-i}\cdot \Delta'\) for some \(i\in \mathbb{N}\) such that \(\Delta''\in [\Delta, 2\Delta)\).
%Let \(\ell_i\) be the length of intervals along the \(i_{th}\) dimension.
Conditioned on a particular choice of \(C'\), each \((d-1)\)-dimensional hyperplane perpendicular to
the \(j\)-th axis in the grid of size \(\frac{\Delta''}{2}\) aligned with the shift of \(T\) has a
\(j\)-coordinate somewhere on an interval of length \(\ell_j\).
Therefore, there are at most \(\ceil{\frac{2\ell_j}{\Delta''}}\) \((d-1)\)-dimensional hyperplanes
possibly shifted to lie between \(p\) and \(q\).
%Let \(e(C'(\Delta',P'))\) denote the event that a cell \(C'\) of side length \(\Delta'\) exists and the subset of points in \(C'\) is \(P'\).
Summing over all \(d\) choices of \(j\), the probability that \(p\) and \(q\) are separated by some
\((d-1)\)-dimensional hyperplane in the grid of size \(\frac{\Delta''}{2}\) is at most
\(\sum^d_{j=1}{\ceil{\frac{2\ell_j}{\Delta''}}\cdot \lambda\cdot \frac{1}{\ell_j}}\) conditioned on
\(\event'\).
Therefore, \(\probability\Brack{\Delta_{C(p,q)}\in [\Delta,2\Delta) \mid \event'} \leq
\sum^d_{j=1}{\ceil{\frac{2\ell_j}{\Delta''}}\cdot \lambda\cdot \frac{1}{\ell_j}}\).

Suppose \(\Delta'\in (\frac{n^8\lambda}{2},n^8\lambda]\).
Recall \(\event'\) requires some shift of \(T\) such that each point of \(P'\) lies distance
\(\frac{\Delta'}{n^4}\) or greater from each side of \(C'\).
Therefore, \(\ell_j\ge \frac{2\Delta'}{n^4}>n^4\lambda\) for each \(j\).
Further,
\begin{align*}
\probability\Brack{\Delta_{C(p,q)} \in [\Delta,2\Delta) \mid \event'}
&\le \sum^d_{j=1}{\ceil{\frac{2\ell_j}{\Delta''}}\cdot \lambda\cdot
\frac{1}{\ell_j}}\\
&\le \sum^d_{j=1}{\frac{2\ell_j+\Delta''}{\Delta''}\cdot \lambda\cdot
\frac{1}{\ell_j}}\\
&< O\Paren{\frac{\lambda}{\Delta''}+\frac{1}{n^4}}.
\end{align*}

Because \(\Delta \in [\frac{\lambda}{\sqrt{d}},n^4\lambda]\), we see \(\Delta'' < 2n^4\lambda\),
implying \(\frac{\lambda}{\Delta''} > \frac{2}{n^4}\).
We have\linebreak
\(\probability\Brack{\Delta_{C(p,q)} \in [\Delta,2\Delta) \mid \event'} < O(\frac{\lambda}{\Delta})\).

If \(\Delta'\le \frac{n^8\lambda}{2}\), \(C'\) must be the root cell of the sub-quadtree \(T\).
In this case, all shifts of \(T\) will satisfy \(\event'\) assuming \(n\) is sufficiently large.
We have \(\ell_j = \frac{\Delta'}{3}\) and \(\Ceil{\frac{2\ell_j}{\Delta''}} \leq
\frac{3\ell_j}{2\Delta''}\) for all \(j\).
\begin{align*}
\probability\Brack{\Delta_{C(p,q)} \in [\Delta,2\Delta) \mid \event'}
&\le \sum^d_{j=1}{\ceil{\frac{2\ell_j}{\Delta''}}\cdot \lambda\cdot
\frac{1}{\ell_j}}\\
&= \sum_{j = 1}^d \frac{3\lambda}{2\Delta''}\\
&= O\Paren{\frac{\lambda}{\Delta}}.
\end{align*}
%In this case, no point in \(P\) is at distance less than \(\frac{n^4\Delta'}{\eps_0}\) away from any sides of the parent cell of \(C'\). Therefore, \(\ell_i\ge \frac{2n^4\Delta'}{\eps_0}\) for any \(i\) and In this case, \(\ell_i\ge \frac{2\Delta'}{n^4}>n^4\lambda\) for every \(i\) and similarly, \(\probability(\Delta_{C(p,q)}\in [\Delta,2\Delta)|e(C'(\Delta',P')))<d\cdot(\frac{2\lambda}{\Delta''}+\frac{\eps_0\lambda}{2n^4\Delta'})\) which is less than \(d\cdot(\frac{2\lambda}{\Delta''}+O(\frac{1}{n^4}))\) since \(\Delta'\ge\Delta''>\frac{\lambda}{\sqrt{d}}\).
%
%In both cases, \(\probability(\Delta_{C(p,q)}\in
%[\Delta,2\Delta)|e(C'(\Delta',P')))<O(\frac{\lambda}{\Delta''}+\frac{1}{n^4})\). Since \(\Delta\) is
%some value in \([\frac{\lambda}{\sqrt{d}},n^4\lambda]\), \(\Delta''\) is at most \(2n^4\lambda\),
%which means \(\frac{\lambda}{\Delta''}\ge2\cdot\frac{1}{n^4}\).
%Plus the fact that \(\Delta''\in
%[\Delta, 2\Delta)\), we could see that \(\probability(\Delta_{C(p,q)}\in
%[\Delta,2\Delta)|e(C'(\Delta',P')))<O(\frac{\lambda}{\Delta})\) for any \(C'\). Take the union bound
%for all \(e(C'(\Delta',P'))\) under Property 2 of Lemma~\ref{lm:tree}, we have
%\(\probability(\Delta_{C(p,q)}\in [\Delta,2\Delta))<O(\frac{\lambda}{\Delta})\) for any
%\(\Delta\in[\frac{\lambda}{\sqrt{d}}, n^4\lambda]\).
Either way, \(\probability\Brack{\Delta_{C(p,q)} \in [\Delta,2\Delta) \mid \event'} \leq
O(\frac{\lambda}{\Delta})\).

Again, \(\event\) immediately implies \(\event'\), but the converse is not necessarily true.
Therefore,
\begin{align*}
  \probability\Brack{\Delta_{C(p,q)} \in [\Delta,2\Delta) \mid \event}
  &= \frac{\probability[\Delta_{C(p,q)} \in [\Delta,2\Delta) \wedge \event]}{\probability[\event]}\\
  &\leq \frac{\probability[\Delta_{C(p,q)} \in [\Delta,2\Delta) \wedge \event']}{\probability[\event]}\\
  &= \frac{\probability[\Delta_{C(p,q)} \in [\Delta,2\Delta) \mid \event'] \cdot
  \probability[\event']}{\probability[\event]}\\
  &\leq \frac{\probability[\Delta_{C(p,q)} \in [\Delta,2\Delta) \mid \event']}{\probability[\event]}\\
  &\leq O\Paren{\frac{\lambda}{\Delta}} \cdot \frac{1}{1 - O(n / \log n)}\\
  &= O\Paren{\frac{\lambda}{\Delta}}.
\end{align*}

Assuming Property 2 of Lemma~\ref{lm:tree}, the value of \(\Delta_{C(p,q)}\) is in
\([\frac{\lambda}{\sqrt{d}}, n^4\lambda]\);
smaller cells cannot fit both points, and larger cells cannot separate them without one or both
points lying too close to a cell side.
%Therefore, there can be at most one value for \(\Delta'\) in \([\Delta,2\Delta)\) to be the size of
%a cell containing both \(p,q\) in \(T^*\) for any fixed \(\Delta\in[\frac{\lambda}{\sqrt{d}},
%n^4\lambda]\).
We have
\begin{align*}
  \mathbb{E}(\Delta_{C(p,q)} \mid e) &\le\sum_{\Delta=2^i\cdot \frac{\lambda}{\sqrt{d}}, 0\le i\le
  \lg{(n^4\sqrt{d})}, i\in \mathbb{N}}\probability[\Delta_{C(p,q)}\in [\Delta,2\Delta) \mid e]\cdot 2\Delta \footnotemark\\
                        &<\sum_{\Delta=2^i\cdot \frac{\lambda}{\sqrt{d}}, 0\le i\le \lg{(n^4\sqrt{d})}, i\in \mathbb{N}} O(\frac{\lambda}{\Delta})\cdot 2\Delta\\
                        &\leq O(\log n) \cdot \lambda.
\end{align*}
\footnotetext{We use \(\Lg\) to denote the logarithm with base \(2\).}
We conclude
%\[\mathbb{E}(\distG(p,q))=||p-q||_2+\mathbb{E}(\Phi_{p,q}) \le ||p-q||_2 + 3\sqrt{d}\eps_0\mathbb{E}(\Delta_{C(p,q)}) \le (1+O(\eps_0\log n)) \cdot ||p-q||_2.\]
\begin{align*}
    \mathbb{E}(\distG(p,q) \mid e)&=||p-q||_2+\mathbb{E}(\Phi_{p,q} \mid e)\\
    &\le ||p-q||_2 + 3\sqrt{d}\eps_0\mathbb{E}(\Delta_{C(p,q)} \mid e)\\
    &\le (1+O(\eps_0\log n)) \cdot ||p-q||_2.
\end{align*}
\end{proof}

%% end of comment

\subsection{Reduction to minimum cost flow}
\label{sec:spanner-flow}
Having built our sparse graph, we now reduce to a minimum cost flow problem in \(G^*\).
We model the minimum cost flow problem as follows to simplify later discussions.

Let \(G = (V, E)\) be an arbitrary undirected graph with \(V \in \R^d\).
Let \(\dartsof{E}\) be the set of edges in \(E\) oriented arbitrarily.
%i.e., \(\dartsof{E}=\{(\dartsof{u, v}) \text{ or } (\dartsof{v, u}): (u, v) \in E\}\).
We call \(f \in \R^{\dartsof{E}}\) a \EMPH{flow vector} or more simply, a \EMPH{flow}.
Let \(A\) be a \(|V|\times |\dartsof{E}|\) \EMPH{vertex-edge incidence matrix} where \(\forall (u,(v,w))\in V\times\dartsof{E}\), \(A_{u,(v,w)}=1\) if \(u=v\), \(A_{u,(v,w)}=-1\) if \(u=w\), and \(A_{u,(v,w)}=0\) otherwise.
Given \(f\), we define the \EMPH{divergence} of a vertex \(v\) as \((A f)_{v} = \sum_{(v, w)} f_{(v,w)} - \sum_{(u,v)} f_{(u,v)}\).
For simplicity of exposition, we may sometimes refer to \(f_{(v, u)}\) even though \((u,v) \in \dartsof{E}\).
In such cases, it is assumed \(f_{(v,u)} = -f_{(u,v)}\).

Let \(||\cdot||_{\dartsof{E}}\) be a norm on \(\R^{\dartsof{E}}\) such that \(||f||_{\dartsof{E}}=\sum_{(u,v)\in \dartsof{E}}{|f_{(u,v)}| \cdot ||v - u||_2}\).
Let \(b \in \R^V\)  denote a set of divergences for all \(v\in V\).
We define an instance of \EMPH{uncapacitated minimum cost flow} as the pair \((G, b)\).
We seek a flow vector~\(f\) minimizing \(||f||_{\dartsof{E}}\) subject to \(Af=b\).
% We seek to solve the following optimization problem over possible flow vectors \(f\).

% \begin{align}\label{eq:3.1}    \textrm{Minimize } &||f||_{\dartsof{E}}\\
% \label{eq:3.2}    \textrm{subject to } &Af=b
% \end{align}

In particular, set \(b^* \in \R^V\) such that \(b^*_p=\mu(p), \forall p\in P\) and \(b^*_v=0, \forall v \in V\setminus P\).
Ultimately, we will find an approximate solution to the instance \((G^*, b^*)\).
\begin{comment}
We now wish to solve the following minimum cost flow problem on \(G\).

\begin{align}
    \label{eq:2.1}\textrm{Minimize} &\sum_{(p, q)\in E}{|f_{pq}|\,||p-q||_2}\\
    \nonumber \text{subject } &  \text{to}\\    
    \label{eq:2.2}\forall (p,q)\in E &: f_{pq}=-f_{qp} \\
    \label{eq:2.3}\forall p\in P &: \sum_{q\in V}{f_{pq}=\mu(p)}\\
    \label{eq:2.4}\forall p\in V\setminus P &: \sum_{q\in V}{f_{pq}=0}
\end{align}

\end{comment}
Let \(\cost(G^*, b^*):=||f^*||_{\dartsof{E}}\) for some optimal solution \(f^*\) of this instance.
From construction of \(G^*\), \(\cost(P,\mu)\le\cost(G^*,b^*)\).
With high probability, the conditions of Lemma~\ref{lm:tree} hold true, and by Lemma~\ref{lm:espd},
\(\mathbb{E}(\cost(G^*,b^*))\le(1+O(\eps_0\log{n}))\cost(P,\mu)\).
In particular, \(\mathbb{E}(\cost(G^*, b^*) - \cost(P, \mu)) \leq O(\eps_0\log{n}) \cost(P, \mu)\).
We can guarantee that expected bound holds with high probability as well by doubling the constant in
the big-Oh and taking the best result from \(O(\log n)\) runs of our algorithm.
From here on, we assume both that the conditions Lemma~\ref{lm:tree} hold and
\(\cost(G^*,b^*)\le(1+O(\eps_0\log{n}))\cost(P,\mu)\).

\subsection{Decomposition into simpler subproblems}
\label{sec:spanner-decomposition}

In the sequel, we apply Sherman's generalized preconditioning framework~\cite{DBLP:conf/soda/Sherman17,khesin2021preconditioning} to find an approximate solution to the minimum cost flow instance \((G^*, b^*)\).
For technical reasons, however, we cannot afford to run the framework on the entire sparse graph \(G^*\) at once.
% In Appendix~\ref{app:decomposition}, we describe a reduction from minimum cost flow instance \((G^*, b^*)\) to several simpler minimum cost flow instances each on the induced subgraph of the net points of one simple sub-quadtree.
% The reduction is based on the observation that each simple sub-quadtree subgraph has very small diameter compared to the cost of moving one unit of flow to its one parent net point and back down again to its cousin sub-quadtrees.
% Therefore, any reasonable method of moving the sub-quadtree's net divergence to the parent net point is sufficient for an approximately optimal solution.

% Here, we reduce finding an approximately optimal flow for minimum cost flow instance \((G^*, b^*)\) to finding \(O(n)\) approximately optimal flows, each within an induced subgraph defined by the net points within a single simple sub-quadtree.
Here, we reduce finding an approximately optimal flow for minimum cost flow instance \((G^*, b^*)\) to finding \(O(n)\) approximately optimal flows, each within an induced subgraph defined by the net points within a simple sub-quadtree.

Recall, for each point \(p \in P\), \(\Tilde{C}(p)\) denotes the smallest subcell containing \(p\), and \(N_{\Tilde{C}}\) denotes the net point of subcell \(\Tilde{C}\).
Let \(f\) be the flow such that \(f_{(p, N_{\tilde{C}(p)})} = b^*_p\) for all \(p \in P\).
Let \(G' = (V', E')\) and \(A'\) be the restriction of \(G^*\) and its vertex-edge incidence matrix \(A\) after removing all vertices \(p \in P\).
Let \(b'\) be the restriction of \(b^* - Af\) to vertices of \(G'\).
Every vertex \(p \in P\) of \(G^*\) has exactly one incident edge, so an optimal solution to our original minimum cost flow instance consists of \(f\) along with an optimal solution to the instance defined on \(A'\) and \(b'\).
From here on, we focus on finding an approximately minimum cost flow in \(G'\).

% Suppose there are multiple simple sub-quadtrees. Let \(G_0 = (V_0, E_0)\) be the subgraph induced by the \(m\) net point vertices of a single simple sub-quadtree with no descendent sub-quadtrees.
Suppose there are multiple simple sub-quadtrees. Let \(G_0 = (V_0, E_0)\) be the subgraph induced by the \(m\) net point vertices of a simple sub-quadtree with no descendent sub-quadtrees.
% Let \(G'' = (V'', E'')\) be the induced subgraph over vertices \(V' \setminus V_0\).
% Let \(C\) be the root cell of the simple sub-quadtree for \(G_0\), let \(u\) be a net point for an arbitrary subcell of \(C\), and let \(v\) be the parent net point of \(u\) in \(G'\).
Let \(C\) be the root cell of the simple sub-quadtree for \(G_0\), let \(u\) be a net point for an arbitrary subcell of \(C\), and let \(v\) be the parent net point of \(u\) in \(G'\) where \(C'\) is the subcell with \(v\) as its net point.
In \(O(m)\) time, we compute \(B = \sum_{w \in V_0} b'_w\), the total divergence of vertices within \(G_0\).
We then let \(f'\) be the flow in \(G'\) that is \(0\) everywhere except for \(f_{(u, v)} := B\).
Finally, let \(b'' = b' - A'f'\).

Notice that at least \(B\) units of flow in \(G_0\) need to leave or enter \(C\) by edges of side
length at least \(\Delta_{\Tilde{C}}\).
Given \(\Delta_{C}\le O(1/n^8)\Delta_{\Tilde{C}}\), we can lazily
assign the flow between net points of \(C\) and \(v\), increasing the cost by at most \(2\sqrt{d}\Delta_{C}B\le O(1/n^8)\Delta_{\Tilde{C}}B\).
We have the following lemma.
\begin{lemma}
  There exists a flow \(f''\) in \(G'\) such that \(f''_{(w,x)} = 0\) for all \(w \in V_0, x \notin V_0\); \(Af'' = b''\); and \(||f'' + f'||_{\dartsof{E'}} \leq (1 + O(1/n^8)) \cdot \cost(G', b')\).
\end{lemma}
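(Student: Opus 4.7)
The plan is to take an optimal flow $f^*$ for $(G', b')$ and transform it into $f' + f''$ by rerouting every cut-edge flow through the single edge $(u, v)$. The first observation is that all cut edges from $V_0$ to $V' \setminus V_0$ actually terminate at $v$: since $C$ is the root of a simple sub-quadtree created by Rule 2, Property 3 of Lemma~\ref{lm:tree} forces $C$ to lie entirely inside one subcell of its parent cell $C^p$ in $T^*$, and this subcell must be $C'$ (the subcell whose net point is $v$), so every subcell of $C$ has $C'$ as its parent subcell and its unique parent-subcell edge runs to $v$.

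Let $\alpha_w = f^*_{(w,v)}$ for each subcell net point $w$ of $C$. Flow conservation inside $V_0$ gives $\sum_w \alpha_w = B$, and the optimality of $f^*$ forces all $\alpha_w$ to have the same sign (otherwise a cheaper detour $w \to w'$ inside $V_0$ via the clique edge in $C$ would reduce cost), so $\sum_w |\alpha_w| = |B|$. I then construct $f' + f''$ by replacing the flow on each cut edge $(w, v)$ with the same amount routed along the two-edge path $w \to u \to v$, using the clique edge $(w, u)$ inside $V_0$ and the edge $(u, v)$, which aggregates $\sum_w \alpha_w = B = f'_{(u, v)}$. The only cut edge carrying flow in $f' + f''$ is then $(u, v)$, so $f''$ vanishes on all cut edges; because each redirection preserves divergences, $A'(f' + f'') = A'f^* = b'$ and hence $A'f'' = b''$. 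A per-edge calculation using the triangle inequality and $\|w - u\|_2 \leq \sqrt{d}\Delta_C$ gives the cost-change bound $\|f' + f''\|_{\dartsof{E'}} - \cost(f^*) \leq 2\sqrt{d}\Delta_C B$.

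The main obstacle is lower-bounding $\cost(G', b')$ by a matching quantity; I claim $\cost(G', b') \geq 2B\Delta_{C'}$. Since $C^p$'s only $P$-bearing subcell is $C'$ and every $P$ point inside $C^p$ lies in $C \subseteq V_0$, every flow path in $f^*$ from $V_0$ to a net point outside $V_0$ with nonzero $b'$-value must travel entirely out of $C^p$. In $G'$, such a path is forced to transition between child subtrees via a clique edge at some ancestor cell of $C$'s grandparent $C^{pp}$; any such clique edge has length at least $\eps_0\Delta_{C^{pp}} \geq 2\eps_0\Delta_{C^p} = 2\Delta_{C'}$, using $\Delta_{C^{pp}} \geq 2\Delta_{C^p}$ regardless of whether $C^p$ was created by Rule 2 or Rule 3. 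Applying a non-negative flow decomposition and summing over the $B$ units of crossing flow yields the claim. Combined with the cost-change bound and the hypothesis $\Delta_C \leq O(1/n^8)\Delta_{C'}$, this gives $\|f' + f''\|_{\dartsof{E'}} \leq (1 + O(1/n^8)) \cost(G', b')$.
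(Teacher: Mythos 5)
Your construction and the cost-increase half of your argument are the same as the paper's: all cut edges leaving \(V_0\) end at \(v\) because \(C\) lies inside the single subcell \(C'\) of \(C^p\); the cut flows can be taken to have one sign; rerouting each \((w,v)\) along \(w\to u\to v\) preserves divergences, zeroes \(f''\) on the cut, and costs at most \(2\sqrt{d}\,\Delta_C|B|\) extra. The genuine gap is in your lower bound on \(\cost(G',b')\). You assert that every flow path escaping \(V_0\) must cross a clique edge at \(C^{pp}\) or an ancestor, of length at least \(\eps_0\Delta_{C^{pp}}\ge 2\Delta_{C'}\). The graph \(G'\) does not force this: flow can change branches using only parent--child subcell edges. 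Concretely, let \(D_k\) be the subcell of the \(k\)-th ancestor of \(C^p\) that contains \(C'\). For \(k\ge 2\), \(D_k\) has side \(2^k\Delta_{C'}\) and may contain up to \(2^d\) subcells of the \((k-1)\)-st ancestor, some of which hold points of \(P\) outside \(C^p\); a path \(v\to N_{D_1}\to N_{D_2}\to N_{\tilde{E}}\to\cdots\) into such a sibling subcell \(\tilde{E}\) and then down to a divergence-bearing leaf subcell uses no clique edge at any level. So the inequality \(\cost(G',b')\ge 2B\Delta_{C'}\) is not established as written; you have not accounted for all escape routes, and your stated per-unit cost of an escape is wrong for these routes.

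The needed bound only has to be \(\cost(G',b')\ge \Omega(\Delta_{C'})\,|B|\), and the paper gets it with a much more local observation that you could substitute directly: since Rule 2 can never fire at two consecutive levels (a cell created by Rule 2 has side only three times its points' bounding box, so Rule 2's trigger fails there), \(C^p\) was produced by ordinary subdivision of \(C^{pp}\); hence the parent subcell of \(C'\) is an aligned square of side \(2\Delta_{C'}\), and the edge from \(v\) to its parent net point has length exactly \(\sqrt{d}\,\Delta_{C'}/2\). That edge is the \emph{only} edge of \(G'\) joining \(V_0\cup\{v\}\) to the rest of the graph (there are no clique edges at \(C^p\) because \(C'\) is its only subcell), so every one of the \(|B|\) units of net excess in \(V_0\) must cross it, giving \(\cost(G',b')\ge |B|\,\Delta_{C'}/2\). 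Combining this with your detour bound and \(\Delta_C\le O(1/n^8)\Delta_{C'}\) yields the \((1+O(1/n^8))\) factor; the rest of your proof then goes through as in the paper.
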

\begin{proof}
Let \(\tilde{C}\) be the subcell for which \(v\) is a net point.
Let \(\Delta_{\Tilde{C}}\) be the side length of \(\Tilde{C}\).
By construction of \(G'\), at least \(B\) units of flow must travel to or from vertex \(v\) from \(G_0\) at a cost of \(\Delta_{\tilde{C}}\).
Specifically, \(G_0\) is totally inside \(\Tilde{C}\), \(v\) is the only vertex in \(\Tilde{C}\) incident to some edge crossing the side of \(\Tilde{C}\), and the nearest vertex \(x\notin V_0\) is at least \(\Delta_{\Tilde{C}}\) far from \(v\). So \(\cost(G',b')\ge \Delta_{\Tilde{C}}B\).

Suppose \(f^{'*}\) is a flow in \(G'\) with cost \(\cost{(G',b')}\).
Let \(N_C\) be the set of net points of subcells of \(C\).
We may assume there is no pair \(y, z \in N_C\) such that \(f_{(y,v)} > 0\) and \(f_{(v,z)} > 0\), because we could send the flow directly between \(y\) and \(z\) more cheaply.
%We may assume there is no unit go from some \(y\in N_C\) to \(v\) and go back to some \(z\in N_C\). Because this increases the cost of \(f^{'*}\).
%Therefore,
%\(f^{'*}_{(u',v)}\) has same sign for \(u'\in N_C\) and \(\sum_{u'\in N_C}f^{'*}_{(u',v)}=B\).
We create flow \(f'''\) as follows starting with \(f'' = f^{'*}\).
While there exists some vertex \( u'\in N_C\backslash\{u\}\) with \(f'''_{(u', v)}\ne 0\), let \(\delta = f'''_{(u', v)}\).
We divert flow by setting \(f'''_{(u,v)} \gets f'''_{(u,v)} + \delta\),  \(f'''_{(u', u)} \gets f'''_{(u', u)} + \delta\), and \(f'''_{(u', v)} \gets 0\).
%all units flow on \((u',v)\) by going through \((u',u)\) and \((u, v)\).
This increases the cost by at most twice of the length of the diagonal of \(C\) per diverted unit of flow.
Overall, we divert at most \(B\) units.
The total cost increase is at most \(2\sqrt{d}\Delta_CB \le O(1/n^8)\cost{(G',b')}\) where \(\Delta_C\) is side length of \(C\), because \(\Delta_{C}\le O(1/n^8)\Delta_{\Tilde{C}}\).
%Let \(f'''\) be the flow after diverting \(f^{'*}\). 
We have \(||f'''||_{\dartsof{E}}\le(1+O(1/n^8))\cdot\cost(G',b')\).
Finally, let \(f''=f'''-f'\).
%so that \(f''\) is a flow in \(G'\) such that \(f''_{(w,x)} = 0\) for all \(w \in V_0, x \notin V_0\); \(Af'' = b''\).
\end{proof}

The above lemma implies we can use the following strategy for approximating a minimum cost flow in \(G'\):
Let \(b_0\) be the restriction of \(b''\) to \(V_0\).
We find a flow in \(G_0\) with divergences \(b_0\) of cost at most \((1 + O(\eps)) \cdot \cost(G_0, b_0)\) using the algorithm described in the next section.
Then, we recursively apply our algorithm on \(G'' = (V'', E'')\), the induced subgraph over \(V'' = V' \setminus V_0\).
The depth of recursion is \(O(n)\), so the total cost from combining our separately computed flows is \((1 + O(\eps))(1 + O(1/n^7)) \cdot \cost(G', b') = (1 + O(\eps)) \cost(G', b')\).
% We must emphasize that simple sub-quadtrees may still have linear depth, so we still need to apply our own techniques to make Sherman's framework run within the desired time bounds.
We must emphasize that simple sub-quadtrees may still have linear depth, so we still need to apply our own techniques to make Sherman's framework run within the desired time bounds.

%%% Local Variables:
%%% TeX-master: "main.tex"
%%% End:

%%
%% PRECONDITIONER
%%
\section{Approximating the minimum cost flow}\label{sec:sol}
% Let \(G = (V, E)\) be an induced subgraph of sparse graph \(G^*\) where \(V\) is the subset of net points for one simple sub-quadtree \(T\) as defined above.
Let \(G = (V, E)\) be an induced subgraph of sparse graph \(G^*\) where \(V\) is the subset of net points for one simple sub-quadtree \(T\) as defined above.
Let \(m = |E|\), and let \(A\) be the vertex-edge incidence matrix for \(G\).
We now describe the ingredients we need to provide to efficiently approximate the minimum cost flow problem in \(G\) using Sherman’s generalized preconditioning framework \cite{khesin2021preconditioning, DBLP:conf/soda/Sherman17}.
We then provide those ingredients one-by-one to achieve a near-linear time \((1+O(\eps))\)-approximate solution for the minimum cost flow instance. 

\subsection{The preconditioning framework}
\begin{comment}
Let \(\dartsof{E}\) be the set of edges in \(E\) oriented arbitrarily, i.e., \(\dartsof{E}=\{(\dartsof{u, v}) \text{ or } (\dartsof{v, u}): (u, v) \in E\}\). 
Let \(f\) be a vector in \(\R^{\dartsof{E}}\) and let \(||\cdot||_{\dartsof{E}}\) be a norm on \(\R^{\dartsof{E}}\) such that \(||f||_{\dartsof{E}}=\sum_{(\dartsof{p, q})\in \dartsof{E}}{|f_{pq}| ||p-q||_2}\).
Let \(A\) be a \(|V|\times |\dartsof{E}|\) matrix such that \(\forall v,(\dartsof{p,q})\in V\times\dartsof{E}\), \(A_{v,(\dartsof{p,q})}=1\) if \(v=p\), \(A_{v,(\dartsof{p,q})}=-1\) if \(v=q\), and otherwise \(A_{v,(\dartsof{p,q})}=0\).

If we consider \(f\) as a vector denoting the flow \(\tau\) such that \(\tau(p, q)=f_{pq}, \tau(q,p)=-f_{pq}, \forall (\dartsof{p, q})\in \dartsof{E}\), then \(f\) encodes the constraints in \eqref{eq:2.2} and \(Af\) is a vector in \(\R^V\) representing net flow out of each vertex \(v\in V\). Value of \(||f||_{\dartsof{E}}\) encodes the objective function in \eqref{eq:2.1}.

Let \(b\in \R^V\)  denote the divergences for all \(v\in V\) such that \(b_p=\mu(p), \forall p\in P\) and \(b_v=0, \forall v \in V\setminus P\). Then \(Af=b\) is sufficient to express constraints in \eqref{eq:2.3} and \eqref{eq:2.4}. So the following LP problem encodes the minimum cost flow problem in previous section.

\begin{align}
\label{eq:3.1}    \textrm{Minimize } ||&f||_{\dartsof{E}}\\
\label{eq:3.2}    \textrm{subject } \text{to } A&f=b
\end{align}
\end{comment}

Consider an instance of the minimum cost flow problem in \(G\) with an arbitrary divergence vector \(\tilde{b} \in \R^{V}\), and let \(f^*_{\Tilde{b}}:=\Argmin_{f\in\R^{\dartsof{E}}, Af=\Tilde{b}}{||f||_{\dartsof{E}}}\).
A flow vector \(f\in\R^{\dartsof{E}}\) is an \EMPH{\((\alpha, \beta)\) solution} to the problem if
\begin{align}
    \nonumber ||f||_{\dartsof{E}}&\le \alpha||f^*_{\Tilde{b}}||_{\dartsof{E}}\\
    \nonumber ||Af-\tilde{b}||_1&\le \beta||A||\,||f^*_{\Tilde{b}}||_{\dartsof{E}}
\end{align}
where \(||A||\) is the norm of the linear map represented by \(A\).
An algorithm yielding an \((\alpha, \beta)\)-solution is called an \EMPH{\((\alpha, \beta)\)-solver}.

By arguments in \cite{khesin2021preconditioning}, we seek a preconditioner \(B\in \R^{V\times V}\) of full column rank such that, for any \(\Tilde{b}\in\R^V\) with \(\sum_{v\in V}{\Tilde{b_v}}=0\), it satisfies

\begin{equation}\label{eq:3.3}
||B\Tilde{b}||_1\le \Min\{||f||_{\dartsof{E}}:f\in \R^{\dartsof{E}}, Af=\Tilde{b}\}\le \kappa ||B\Tilde{b}||_1
\end{equation}
for some sufficiently small function \(\kappa\) of \(n\), \(\eps\), and \(d\).

Let \(M\) be the time it takes to multiply \(BA\) and \((BA)^T\) by a vector. Then there exists a \((1+\eps, \beta)\)-solver for any \(\eps, \beta>0\) for this problem with running time bounded by \(O(\kappa^2(|V|+|\dartsof{E}|+M)\log{|\dartsof{E}|}(\eps^{-2}+\log{\beta^{-1}})\) \cite{DBLP:conf/soda/Sherman17}. Moreover, if a feasible flow \(f\in \R^{\dartsof{E}}\) with cost \(||f||_{\dartsof{E}}\le \kappa B\Tilde{b}\) can be found in time \(K\), there is a \((\kappa, 0)\)-solver with running time \(K\).
By setting \(\beta=\eps\kappa^{-2}\) \cite{khesin2021preconditioning}, the composition of these two solvers is a \((1+2\eps, 0)\)-solver with running time bounded by

\begin{displaymath}
O(\kappa^2(|V|+|\dartsof{E}|+M)\log{|\dartsof{E}|}(\eps^{-2}+\log{\kappa})+K).
\end{displaymath}

\subsection{Preconditioning the minimum cost flow}

We present a way to construct such a preconditioner \(B\) similar to the one of Khesin \emph{et al.}~\cite{khesin2021preconditioning} that guarantees \(\kappa\) in \eqref{eq:3.3} is sufficiently small for our performance objective. Our algorithm \textit{does not} compute \(B\) directly, because \(B\) is not sparse. However, the time for individual applications of \(BA\) or \((BA)^T\) is \(O(|V|+|\dartsof{E}|)\).

Let \(\Tilde{\mathbb{C}}\) denote the set of all subcells defining the net points of \(G\).
For any subcell \(\tilde{C}\in \tilde{\mathbb{C}}\), let \(N_{\tilde{C}}\) denote its net point and let \(\Delta_{\tilde{C}}\) denote its side length.
%Let \(\Tilde{C}(p)\) be the smallest subcell containing \(p\) in \(\Tilde{\mathbb{C}}\), \(N_{\Tilde{C}}\) the net point of subcell \(\Tilde{C}\), and \(\Delta_{\Tilde{C}}\) the side length of \(\Tilde{C}\).

Let \(B\) be a matrix indexed by \((u, v) \in V \times V\) such that,
%for every real point \(p\in P\), we set \(B_{p,p}=||p-N_{\Tilde{C}(p)}||_2\) and \(B_{p,v}=0\) for \(v \ne p\).
for every net point \(\nu\) in \(V\) where \(\nu\) is the net point of some subcell \(\Tilde{C}\), we set \(B_{\nu, v}=\frac{\Delta_{\Tilde{C}}}{\Lambda}\) for all descendent net points \(v\) of~\(\nu\), where \(\Lambda=22 \lg (\frac{n}{\eps_0})\).
\(B_{\nu, v} = 0\) for all other \(v\).
Matrix \(B\) has full column rank, because each column specifies exactly which ancestor net points each vertex has in \(G\).

Now, fix any \(\Tilde{b}\in \R^V\) such that \(\sum_{v\in V}{\Tilde{b}_v}=0\).
Observe,
\begin{equation}
\label{eq:cond} ||B\Tilde{b}||_1=
%\sum_{s\in S}{|\Tilde{b}_p|||p-N_{\Tilde{C}(p)}||_2}+
\sum_{\Tilde{C}\in \Tilde{\mathbb{C}}}{\frac{\Delta_{\Tilde{C}}}{\Lambda}{|\sum_{v\in \Tilde{C}}{\Tilde{b}_v}}|}.
\end{equation}

\begin{lemma}\label{lm:below}
We have \(||B\Tilde{b}||_1\le \Min\{||f||_{\dartsof{E}}:f\in \R^{\dartsof{E}}, Af=\Tilde{b}\}\).
\end{lemma}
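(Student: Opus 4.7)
The plan is to fix an arbitrary flow $f \in \R^{\dartsof{E}}$ with $Af = \tilde{b}$ and establish $||f||_{\dartsof{E}} \ge ||B\tilde{b}||_1$; the lemma then follows by taking the minimum over such $f$. Starting from the identity~\eqref{eq:cond}, I would interpret $\phi_{\tilde{C}} := \sum_{v \in \tilde{C}} \tilde{b}_v$ via flow conservation: since $Af = \tilde{b}$ and ``$v \in \tilde{C}$'' means $v$ is a descendant net point of $N_{\tilde{C}}$ in the subcell hierarchy, $\phi_{\tilde{C}}$ equals the net flow out of the descendants of $N_{\tilde{C}}$. By the triangle inequality, $|\phi_{\tilde{C}}| \le \sum_{e \in \delta(\tilde{C})} |f_e|$, where $\delta(\tilde{C})$ is the set of edges of $G$ with exactly one endpoint in $\tilde{C}$ (in the descendant sense).

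Swapping the order of summation reduces the problem to a per-edge inequality:
\[
||B\tilde{b}||_1 \;\le\; \frac{1}{\Lambda} \sum_{e \in \dartsof{E}} |f_e| \, \sigma(e),
\qquad
\sigma(e) \;:=\; \sum_{\tilde{C}:\, e \in \delta(\tilde{C})} \Delta_{\tilde{C}}.
\]
It therefore suffices to prove $\sigma(e) \le \Lambda \, ||u - v||_2$ for every edge $e = (u,v) \in \dartsof{E}$, which then yields $||B\tilde{b}||_1 \le \sum_e |f_e|\, ||u-v||_2 = ||f||_{\dartsof{E}}$.

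I would then split by edge type. For a parent--child edge $e = (N_{\tilde{C}^p}, N_{\tilde{C}})$, the only crossing subcell is $\tilde{C}$ itself, so $\sigma(e) = \Delta_{\tilde{C}}$; since net points sit at subcell centers, a direct calculation gives $||u-v||_2 = \tfrac{\sqrt{d}}{2}\Delta_{\tilde{C}}$, comfortably yielding $\sigma(e) \le \Lambda \, ||u-v||_2$. For a clique edge $e = (N_{\tilde{C}_1}, N_{\tilde{C}_2})$ within a cell $C$, the crossing subcells form the two ancestor chains in the subcell tree from $\tilde{C}_1$ and $\tilde{C}_2$ up to (but excluding) their lowest common ancestor, contributing two subcells per level whose sides form a geometric series. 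The key step is to lower-bound $||u - v||_2$ by invoking Property~2 of Lemma~\ref{lm:tree}: at each ancestor subcell level, the moats force any $p_i \in \tilde{C}_i \cap P$ to sit in the interior of its ancestor subcell with margin proportional to the subcell side, so $p_1$ and $p_2$ are separated by an amount proportional to that side. Combining with $||N_{\tilde{C}_i} - p_i||_2 \le \tfrac{\sqrt{d}}{2}\Delta_{\tilde{C}_i}$ transfers this bound to the net points, and a scale-by-scale matching against the geometric series defining $\sigma(e)$ yields the desired inequality.

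The main obstacle is the clique edge case: without the moats, the geometric series of ancestor subcell sides can exceed the trivial lower bound $\eps_0 \Delta_C$ on the edge length by an $\Omega(1/\eps_0)$ factor, so the proof requires the moat-based level-by-level comparison to control the sum by $\Lambda = 22 \lg(n/\eps_0)$. This is precisely where the conditionally-compressed construction of Section~\ref{sec:spanner} pays off: the same moat property underlying Lemma~\ref{lm:espd}'s distance-distortion bound also controls the conditioning of our preconditioner.
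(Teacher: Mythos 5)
Your first two steps are fine, and in fact your reduction is lossless: applying the lemma to \(\tilde{b} = Af\) for a unit flow on a single edge \(e=(u,v)\) shows that the per-edge inequality \(\sigma(e)\le \Lambda\,||u-v||_2\) is exactly what Lemma~\ref{lm:below} asserts on such instances, and conversely the family of per-edge inequalities implies the lemma by your summation-swap. The genuine gap is in the clique-edge case, which is the entire difficulty, and the mechanism you invoke does not exist: Property~2 of Lemma~\ref{lm:tree} guarantees only that a point is at distance \((\text{subcell side})/n^4\) from a separating subcell boundary, \emph{not} a separation ``proportional to the subcell side.'' So the scale-by-scale matching cannot charge a crossing ancestor subcell of side \(s\) against the edge length. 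Concretely, take a cell \(C\) of side \(\Delta_C\) and two points of \(P\) lying just on either side (at distance roughly \(\eps_0\Delta_C/n^4\)) of a grid line that remains a subcell boundary for the ancestors of \(C\) up to \(\lg(1/\eps_0)\) levels above \(C\); this violates no moat (each level only demands distance \((\text{side})/n^4\) from the line) and triggers no compression. The clique edge in \(C\) between the two adjacent subcells has length about \(\eps_0\Delta_C\), while its two crossing ancestor chains contribute subcell sides \(\eps_0\Delta_C, 2\eps_0\Delta_C,\dots,\Delta_C/2\), so \(\sigma(e)=\Theta(\Delta_C)=\Theta\bigl(||u-v||_2/\eps_0\bigr)\). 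Since the chains always merge once the ancestor's subcell side reaches \(\Delta_C\) (grid alignment), the best per-edge bound one can prove is \(\sigma(e)\le (2/\eps_0)\,||u-v||_2\), and this is tight; with \(\Lambda=22\lg(n/\eps_0)\ll 1/\eps_0\) (i.e., once \(\eps\) is small), your chain of inequalities does not close. The sum \(\sigma(e)\) is governed by its largest term, not by the number of scales, so no logarithmic bound can come out of this accounting.

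For comparison, the paper does not attempt a per-edge bound at all. It decomposes the optimal flow into source--sink path flows and charges each subcell's contribution only at the \emph{endpoints} of each path, against the \emph{whole} path cost \(||f^i||_{\dartsof{E}}\) rather than against the single edge crossing the cut: subcells more than \(5\Lg n\) levels below the separating scale are dismissed using the moat property (their total charge is an \(O(\eps_0/n)\) fraction), subcells more than \(\Lg\eps_0^{-1}+1\) levels above contain both endpoints and need no charge, and each of the remaining \(O(\lg(n/\eps_0))\) scales is charged \(O(1)\) times the path cost via the claim that the path leaves such a subcell through a sufficiently long edge. That endpoint/path charging is where the logarithmic factor comes from, and it is precisely what your cut-by-cut accounting (which charges every unit of flow once per crossing scale at full subcell side length) gives up. I will add that the configuration above is also the delicate point for that last step of the paper's own charging argument, so if you pursue your route you should expect at best \(\Lambda=\Theta(\eps_0^{-1})\)-type normalization (which would still be consistent with the condition number \(\kappa=O(\eps_0^{-1}\log(n/\eps_0))\) of Lemma~\ref{lm:above}), not a proof of the inequality with \(\Lambda=22\lg(n/\eps_0)\) as stated.
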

% Lemma~\ref{lm:below} is analogous to Claim 14 of Khesin~\etal~\cite{khesin2021preconditioning}.
% Their proof can be interpreted as charging each of the summands in~\(\Lambda \cdot ||B\Tilde{b}||_1\) to the cost of the optimal flow where they overcharge by a factor equal to the depth of their tree.
% For our proof, we consider a path decomposition of the flow and charge to the cost of the flow one path at a time.
% Cell sides do not intersect moats of points in~\(P\), so only~\(O(\log(n / \eps_0))\) charges made to a single path flow are comparable to its cost.
% The remaining charges are negligible.
% See Appendix~\ref{app:omitted} for details.

\begin{proof}%[of Lemma~\ref{lm:below}]
Let \(f^*_{\Tilde{b}}:=\Argmin_{f\in\R^{\dartsof{E}}, Af=\Tilde{b}}{||f||_{\dartsof{E}}}\).
%For every point \(p\in P\), the divergence of \(p\), denoted \(\Tilde{b}_p\), must leave \(p\) to or enter \(p\) from \(N_{\Tilde{C}(p)}\). The total cost of transferring units between \(p\) and \(N_{\Tilde{C}(p)}\) for all \(p\in P\) is the first term in the right hand side of \eqref{eq:cond}.
%
%To account for the remainder of the right hand side, 
We arbitrarily decompose \(f^*_{\tilde{b}}\) into a set of flows \(F = \Set{f^1, f^2, \dots}\) with the following properties: 1) each flow follows a simple path between two vertices \(u\) and \(v\); 2) for each flow \(f^i \in F\) and edge \((u, v) \in \dartsof{E}\) either \(f^i(u, v) = 0\) or its sign is equal to the sign of \(f^*_{\tilde{b}}(u,v)\); 3) for each flow \(f^i \in F\) and vertex \(v\), either \((Af^i)_v = 0\) or its sign is equal to \(\tilde{b}_v\); and 4) for each edge \((u,v) \in \dartsof{E}\), we have \(f^*_{\tilde{b}}(u, v) = \sum_{f^i \in F} f^i(u, v)\).
The existence of such a decomposition is a standard part of network flow theory and one can be computed in a simple greedy manner (however, our algorithm does not actually need to compute one).
From construction, we have \(\sum_{f^i \in F} ||f^i||_{\dartsof{E}} = ||f^*_{\tilde{b}}||_{\dartsof{E}}\).
We describe a way to charge summands of \(\sum_{\Tilde{C}\in \Tilde{\mathbb{C}}} \Delta_{\Tilde{C}} |\sum_{v\in \Tilde{C}} \Tilde{b}_v|\) to the summands of \(\sum_{f^i \in F} ||f^i||_{\dartsof{E}}\).
Our charges will cover each of the former and exceed each of the latter by at most a \(\Lambda\) factor.
%Consider any subset of disjoint subcells in \(\mathbb{\Tilde{C}}\).
%Recall, \(\sum_{v\in V}{\Tilde{b}_v}=0\).
Consider a subcell \(\tilde{C}\).
For each vertex \(u \in \tilde{C}\), for each flow \(f^i\) sending flow to or from \(u\), we charge \(\Delta_{\Tilde{C}} |(Af^i)_u|\).
Clearly, we charge at least \(\Delta_{\Tilde{C}} |\sum_{v\in \Tilde{C}}{\Tilde{b}_v}|\) for each subcell \(\tilde{C}\).
% flow at least equal to the total divergence of vertices inside \(\tilde{C}\) must enter or leave \(\tilde{C}\).
%In particular, this flow must go through an edge of cost at least \(\frac{\Delta_{\Tilde{C}}}{2}\).
%(if \(\Tilde{C}\) is the parent subcell of subcells of a cell comes from an independent random shift, the edge's cost is at least \(\Delta_{\Tilde{C}}\)).
%Adding these costs for all disjoint subcells counts the cost of each unit of flow at most twice.
%Therefore, for any subcell of side length \(\Delta\) we have

%\begin{equation}
%     \label{eq:asize}\sum_{\Tilde{C}\in \Tilde{\mathbb{C}}, \Delta_{\Tilde{C}}=\Delta}{ \frac{\Delta}{4}|\sum_{v\in \Tilde{C}}{\Tilde{b}_v}|}\le ||f^*_{\Tilde{b}}||_{\dartsof{E}}-\sum_{p\in P}{|\Tilde{b_p}|||p-N_{\Tilde{C}(p)}||_2}.
%\end{equation}

%Summing the left hand side of \eqref{eq:asize} with all different subcell sizes in \(G\), we get \(\sum_{\Tilde{C}\in \Tilde{\mathbb{C}}}{\frac{\Delta_{\Tilde{C}}}{4}}{|\sum_{v\in \Tilde{C}}{\Tilde{b}_v}}|\). We prove the first inequality in \Cref{lm:cond} by proving the following inequality.

%\begin{equation}\label{eq:lower}
%\sum_{\Tilde{C}\in \Tilde{\mathbb{C}}}{\frac{\Delta_{\Tilde{C}}}{4}}{|\sum_{v\in \Tilde{C}}{\Tilde{b}_v}}|\le (\frac{\eps_0}{2}\Lg{\frac{n}{\eps_0^{1/3}}}+4\Lg{\frac{n}{\eps_0^{1/4}}}+1)(||f^*_{\Tilde{b}}||_{\dartsof{E}}-\sum_{p\in P}{|\Tilde{b_p}|||p-N_{\Tilde{C}(p)}||_2}).
%\end{equation}

It remains to prove we did not overcharge by too large a factor.
Consider an arbitrary flow \(f^i \in F\) sending flow from some vertex \(u\) to some vertex \(v\).
%Suppose \(p\) is an arbitrary point in \(G\) and \((p,q)\) is a pair of points with \(u\) unit flows transported between them in some optimal solution \(\tau^*\) w.r.t. Euclidean distance with cost \(\cost(P, \mu)\).
Let \(C(u,v)\) be the lowest common ancestor cell containing \(u\) and \(v\).
Let \(\Delta_{C(u,v)}\) be its side length, and let \(C(\hat{u}, v)\) be the child cell of \(C(u,v)\) that includes \(u\).
Let \(\Delta\) be the side length of \(C(\hat{u}, v)\).

Suppose there exists a descendant cell \(C'\) of \(C(\hat{u}, v)\) containing \(u\) that is at least \(5\Lg{n}\) levels down from \(C(\hat{u}, v)\).
Its side length \(\Delta_{C'}\) is at most \(\frac{\Delta}{n^5}\).
By Property 2 of Lemma~\ref{lm:tree}, \(u\) is at least \(\frac{\Delta}{n^4}\) 
distance away from any side of \(C(\hat{u}, v)\) and therefore \(v\) as well.
%Because \(C'\) contains at least one point \(u\in P\), we can apply Property 2 of
%Lemma~\ref{lm:tree} to \(C(\hat{u}, v)\) to show \(C'\) is at least
%\(\frac{\Delta}{n^4}-\frac{\Delta}{n^5} \geq \frac{\Delta}{2n^4}\)
%(if \(u \in P\), for sure it is at least \(\frac{\Delta}{n^3}\) away from any side of \(C(\hat{p}, q)\), if \(p\) is a net point, then the distance from \(p\) to a real point in \(C'\) cannot be larger than \(\Delta_{C'}\)).
Therefore, we charge at most an \(\frac{\eps_0}{n}\) fraction of \(||f^i||_{\dartsof{E}}\) to cover \(u\)'s subcell in \(C'\).
%each unit of flow between \(p\) and \(q\) will be charged at least \(\frac{\Delta}{n^3}-\frac{\Delta}{n^4}\) on \(p\) out of the \(\cost(P, \mu)\). According to the triangle inequality, for any pair \((p, q)\), \(||p-q||_2\le dist_G(p,q)\) in the spanner. If we apply \(\tau^*\) on the spanner \(G\) by sending the same units of flow \(\tau^*{p, q}\) through the shortest path between every pair \((p,q)\), the portion of \(||f^*_{\Tilde{b}}||_{\dartsof{E}}\) contributed by transferring a unit flow between \(p\) and \(q\) through the spanner is at least \(\frac{1}{2}(\frac{\Delta}{n^3}-\frac{\Delta}{n^4})\).
%The side length of subcells of \(C'\) is at most \(\frac{\eps_0\Delta}{n^4}\), and
The amounts charged by similar subcells of smaller side length containing \(u\) form a decreasing
geometric series evaluating to at most that value, so all these small subcells charge at most a
\(\frac{2\eps_0}{n}\) fraction total.
%There are at most \(n{\Lg{\frac{9^2n^3}{\eps_0}}}\) different subcells with side length \(\le \frac{\eps_0\Delta}{n^4}\) that also contain \(u\), so they all charge at most a \(\frac{81\eps(1/n^3)}{2\Lambda(1/n^3 - 1/n^4)}\)

%let \(\Tilde{C}'\) be the subcell of \(C'\) contains \(p\). If we consider the unbalanced units in \(\sum_{v\in \Tilde{C}'}{\Tilde{b_v}}\) solved by \((p, q)\), the value per unit from those unbalanced units can contribute to \(\sum_{\Tilde{C}\ni p, \Delta_{\Tilde{C}}\le \frac{\eps_0\Delta}{n^4}}{\frac{\Delta_{\Tilde{C}}}{4}\sum_{v\in \Tilde{C}}{\Tilde{b_v}}}\) is less than or equal to \(\frac{\eps_0\Delta}{4n^4} n \Lg{\frac{9^2n^3}{\eps_0}}=\frac{\Delta}{4n^3}\eps_0\Lg{\frac{9^2n^3}{\eps_0^}}\).
%Compare this value to an at least \(\frac{1}{2}(\frac{\Delta}{n^3}-\frac{\Delta}{n^4})\) cost charged on \(p\) in \(||f^*_{\Tilde{b}}||_{\dartsof{E}}\),
%it charges at most \(\frac{\eps_0}{2}\Lg{\frac{9^2n^3}{\eps_0}}+1\) times the cost the %latter charges on \(p\).

Now, consider the cells with larger side length.
Suppose there exists an ancestor cell \(C''\) of \(C(\hat{u}, v)\) at least \(\Lg{\eps_0^{-1}} + 1\) levels up from \(C(\hat{u}, v)\), and let \(\Tilde{C}''\) be the subcell of \(C''\) containing \(u\).
Then the side length of \(\Tilde{C}''\) is at least \(\Delta_{C(u, v)}\) and all points in \(C(u, v)\) will be included in \(\Tilde{C}''\) also.
%Consider all unbalance units can be solved within \(C(p, q)\) in \(f^*_{\Tilde{b}}\), \(|\sum_{v\in \Tilde{C''}}\Tilde{b_v}|\) is guaranteed not having those units. 
Therefore, we do not charge to \(||f^i||_{\dartsof{E}}\) for subcell \(\tilde{C}''\), and there are at most \(5\Lg{n}+\Lg{\eps_0^{-1}} \leq 5\Lg{\frac{n}{\eps_0}}\) subcells in addition to those handled above for which we do charge to \(||f^i||_{\dartsof{E}}\).
Consider any such subcell \(\tilde{C}\).
The path carrying \(f^i\) leaves \(\tilde{C}\) through an edge of length at least \(\Delta_{\tilde{C}} / 2\), so we charge at most \(2 \cdot ||f^i||_{\dartsof{E}}\) to cover \(\tilde{C}\).
Summing over all \(5\Lg{\frac{n}{\eps_0}}\) choices of \(\tilde{C}\) and accounting for the tiny
cells as discussed above, we charge at most \((10\Lg{\frac{n}{\eps_0}} + 2\eps_0 / n) ||f^i||_{\dartsof{E}} \leq 11\Lg{(\frac{n}{\eps_0})} \cdot ||f^i||_{\dartsof{E}}\) to cover subcells containing \(u\).
We also charge to \(||f^i||_{\dartsof{E}}\) to cover subcells containing \(v\), so we overcharge by a factor of at most \(22\Lg{(\frac{n}{\eps_0})}  = \Lambda\).
The lemma follows.
\end{proof}

%We have \(||v - u||_2 \geq \)
%\note{Oh no! I can't just drop the spanner thing entirely. I need to charge to flows at this point. I'll fix it later. Sorry. :(}
%In particular, this flow must go through an edge of cost at least \(\frac{\Delta_{\Tilde{C}}}{2}\).
%(if \(\Tilde{C}\) is the parent subcell of subcells of a cell comes from an independent random shift, the edge's cost is at least \(\Delta_{\Tilde{C}}\)).
%Adding these costs for all disjoint subcells counts the cost of each unit of flow at most twice.

%if we look at any subcell \(\Tilde{C}\in\mathbb{\Tilde{C}}\) containing \(u\), the unbalanced units of the subcell \(\sum_{v\in \Tilde{C}}{\Tilde{b_v}}\) charged on \(p\) will be at most counted in \(4\Lg{n}+\Lg{\eps_0^{-1}}=4\Lg{\frac{n}{\eps_0^{1/4}}}\) different subcells in the term \(\sum_{\Tilde{C}\ni p, \Delta_{\Tilde{C}}> \frac{\eps_0\Delta}{n^4}}{\frac{\Delta_{\Tilde{C}}}{4}\sum_{v\in \Tilde{C}}{\Tilde{b_v}}}\) where each of them can contribute at most the cost of solving the same unbalanced units charged on \(p\) in \(||f^*_{\Tilde{b}}||_{\dartsof{E}}\).

%Combine all statements above, we get \Cref{eq:lower}.
%\end{proof}

\begin{lemma}\label{lm:above}
We have \(\Min\{||f||_{\dartsof{E}}:f\in \R^{\dartsof{E}}, Af=\Tilde{b}\}\le \kappa ||B\Tilde{b}||_1\) for some\linebreak
\(\kappa=O(\eps_0^{-1} \log{(n/\eps_0)})\).
Moreover, a flow vector \(f\) satisfying \(Af = \tilde{b}\) of cost at most \(\kappa ||B\Tilde{b}||_1\) can be computed in %\(O(n\eps_0^{-d}\log{\frac{n}{\eps_0}})\).
\(O(m)\) time.
\end{lemma}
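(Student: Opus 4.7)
The plan is to produce an explicit flow $f$ with $Af=\tilde b$ whose cost is within $\kappa=O(\epsilon_0^{-1}\log(n/\epsilon_0))$ of $||B\tilde b||_1$, and to produce it by a single bottom-up pass over the subcells of the simple sub-quadtree $T$. For every subcell $\tilde C\in\tilde{\mathbb{C}}$, define $s_{\tilde C}:=\sum_{v\in \tilde C}\tilde b_v$; all these values can be computed in $O(|V|)$ time via a bottom-up traversal using the identity $s_{\tilde C}=\tilde b_{N_{\tilde C}}+\sum_{\tilde C'\text{ child of }\tilde C}s_{\tilde C'}$, which holds because every vertex of $\tilde C$ other than $N_{\tilde C}$ lies in exactly one child subcell of $\tilde C$.

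Construct $f$ as follows. For every subcell $\tilde C$ whose cell is not the root cell of $T$, route $s_{\tilde C}$ units along the parent tree edge from $N_{\tilde C}$ to $N_{\tilde C^p}$. Then fix an arbitrary root subcell $\tilde C_0$ (i.e., one in the root cell of $T$), and for every other root subcell $\tilde C\ne \tilde C_0$ route $s_{\tilde C}$ units along the clique edge from $N_{\tilde C}$ to $N_{\tilde C_0}$. The identity above immediately gives $(Af)_{N_{\tilde C}}=s_{\tilde C}-\sum_{\tilde C'}s_{\tilde C'}=\tilde b_{N_{\tilde C}}$ at every non-root subcell and at every root subcell $\tilde C\ne \tilde C_0$. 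At $N_{\tilde C_0}$, the extra clique flow contributes incoming mass $\sum_{\tilde C\ne \tilde C_0}s_{\tilde C}=-s_{\tilde C_0}$, using $\sum_{\tilde C\text{ root}}s_{\tilde C}=\sum_{v\in V}\tilde b_v=0$, which is exactly what is needed so that the divergence at $N_{\tilde C_0}$ also equals $\tilde b_{N_{\tilde C_0}}$.

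For the cost, the key geometric observation is that within a simple sub-quadtree Rule~2 is never applied below the root, so for every non-root subcell $\tilde C^p$ has side length exactly $2\Delta_{\tilde C}$; since both $N_{\tilde C}$ and $N_{\tilde C^p}$ lie inside $\tilde C^p$, the parent edge has length at most $2\sqrt d\Delta_{\tilde C}$. Thus the non-root portion of the flow costs at most $2\sqrt d\sum_{\tilde C}\Delta_{\tilde C}|s_{\tilde C}|\le 2\sqrt d\,\Lambda\cdot ||B\tilde b||_1$ by \eqref{eq:cond}. The clique routing within the root cell uses edges of length at most $\sqrt d\,\Delta_{C_{\text{root}}}$, so it costs at most $\sqrt d\,\Delta_{C_{\text{root}}}\sum_{\tilde C\text{ root}}|s_{\tilde C}|$. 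Each root subcell has side length $\epsilon_0\Delta_{C_{\text{root}}}$, so its contribution to $||B\tilde b||_1$ is $(\epsilon_0\Delta_{C_{\text{root}}}/\Lambda)|s_{\tilde C}|$, whence the root cost is bounded by $(\sqrt d\,\Lambda/\epsilon_0)$ times the root's contribution. Summing the two bounds gives total cost at most $O(\Lambda/\epsilon_0)\cdot ||B\tilde b||_1=O(\epsilon_0^{-1}\log(n/\epsilon_0))\cdot ||B\tilde b||_1$.

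Finally, bookkeeping: the bottom-up computation of $s_{\tilde C}$ and the assignment of flow on each parent edge together cost $O(|V|)$, and the clique-balancing step visits each of the $O(\epsilon_0^{-d})$ root subcells once, so the whole construction runs in $O(m)$ time. The main conceptual obstacle is precisely that the simple sub-quadtree may be very deep, so the cost estimate cannot afford a factor proportional to depth; the argument sidesteps this by charging every subcell's routing cost locally against the same subcell's term in $||B\tilde b||_1$, and handling the root-level divergence balancing separately (which is where the $1/\epsilon_0$ factor in $\kappa$ comes from).
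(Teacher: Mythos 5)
Your proof is correct and takes essentially the same approach as the paper's: aggregate the per-subcell divergence sums \(s_{\tilde{C}}\) bottom-up through the subcell hierarchy, balance the leftover imbalance among the root cell's subcells, and charge each subcell's routing cost against its term \(\frac{\Delta_{\tilde{C}}}{\Lambda}|s_{\tilde{C}}|\) in \(||B\tilde{b}||_1\). The only difference is in execution rather than substance: you give the flow in closed form (push \(s_{\tilde{C}}\) along each parent edge, then route all root-level imbalance to one designated root subcell via clique edges) instead of the paper's iterative greedy surplus-cancellation, which if anything yields a slightly cleaner accounting in which the \(\eps_0^{-1}\) factor arises only from the root-level clique edges.
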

% The proof of Lemma~\ref{lm:above} describes a similar greedy algorithm as the one used to prove Claim 15 of Khesin \etal~\cite{khesin2021preconditioning}.
% See Appendix~\ref{app:omitted} for details.
% \end{proof}

\begin{proof}%[of Lemma~\ref{lm:above}]
We describe a greedy algorithm based on one by Khesin \emph{et al.}~\cite{khesin2021preconditioning} to iteratively construct a feasible flow \(f\) satisfying \(Af=\Tilde{b}\) with a cost \(||f||_{\dartsof{E}}\le \kappa B\Tilde{b}\) in \(O(m)\) time.
At any point during \(f\)'s construction, we say the \EMPH{surplus} of vertex \(u \in V\) is \(\pi(u, f)=(Af)_u - \Tilde{b}_{u}\), the difference between the current and desired divergences of \(u\).

\begin{enumerate}
    %\item For every point \(p\in P\) and \(\nu=N_{\Tilde{C}(p)}\), set \(f_{(p, \nu)}=\Tilde{b}_p\).
    %At this point, \(\delta(p, f) = 0\) for all \(p \in P\).
    %Let \(\delta(p, f)=\sum_{u:(p, u)\in E}f_{pu}-\Tilde{b}_{p}\) denote the difference between the total number of outgoing units from \(p\) and the divergence of \(p\). Call it the surplus of \(p\). After step \(1\), \(\delta(p, f)=\Tilde{b}_p-
    %\Tilde{b}_p\), i.e., the surplus of \(p\) is \(0\) \(\forall p\in p\).
    % \item For every cell \(C\) in a postorder traversal of \(G\)'s simple sub-quadtree, for every subcell \(\Tilde{C}\) of \(C\), we do the following. Let \(\nu=N_{\Tilde{C}}\). We choose any two child net points \(v, w\) of \(\nu\) such that \(\pi(v, f)>0>\pi(w, f)\). We then add \(\Min\{|\pi(v, f)|, |\pi(w, f)|\}\) to \(f_{(w,v)}\).
    \item For every cell \(C\) in a postorder traversal of \(G\)'s simple sub-quadtree, for every subcell \(\Tilde{C}\) of \(C\), we do the following. Let \(\nu=N_{\Tilde{C}}\). We choose any two child net points \(v, w\) of \(\nu\) such that \(\pi(v, f)>0>\pi(w, f)\). We then add \(\Min\{|\pi(v, f)|, |\pi(w, f)|\}\) to \(f_{(w,v)}\).
In doing so, we make the surplus of at least one child net point of \(\nu\) equal to \(0\), and we decrease the absolute values of surpluses of both \(v\) and \(w\).
    Therefore, after at most a number of steps equal to the number of child net points of \(\nu\), either all child net points have non-negative surplus or all child net points have non-positive surplus.
    Finally, for each vertex \(v\) among child net points with non-zero surplus, we set \(f_{(\nu , v)}=\pi(v, f)\).
    Afterward, every child net point of \(\nu\) has surplus \(0\).
    In other words, the unbalance among those child net points is collected into \(\nu\).
    %Net point \(\nu\) can have \(O(\eps_0^{-d})\) child net points.
    Each net point \(\nu\) has at most \(2^d\) child net points.
    Therefore, the total running time for this step is \(O(m)\).
    
    \item After performing step 1), all net points with parents have a surplus of \(0\).
    We pick up any two net points \(u\), \(v\) of subcells of \(T\)’s root cell with two different surplus signs as described in step \(2\) and add \(\Min\{|\pi(u, f)|, |\pi(v, f)|\}\) to \(f_{(v,u)}\).
    After \(O(\eps_0^{-d}) = O(m)\) steps, all points \(v\in V\) will have surplus \(0\), and \(f\) is a feasible flow satisfying \(Af=\Tilde{b}\).
\end{enumerate}

We now analyze \(||f||_{\dartsof{E}}\).
Consider a subcell \(\tilde{C}\) of some cell \(C\) with net point \(\nu\).
Flow does not leave or enter \(\tilde{C}\) until we move flow between \(\nu\) and either another net point in \(C\) or \(\nu\)'s parent net point.
Therefore, \(\pi(\nu, f) = -\sum_{v\in \Tilde{C}}{\Tilde{b}_v}\) immediately after moving flow from \(\nu\)'s children to \(\nu\) in step 1) above.
All subsequent steps moving flow to or from \(\nu\) involve an edge of length at most \(\eps_0^{-1}\sqrt{d}\Delta_{\tilde{C}}\) and only serve to reduce \(|\pi(\nu, f)|\).

Summing over all subcells, we get

\begin{equation}
\nonumber ||f||_{\dartsof{E}}\le
\sum_{\Tilde{C}\in \Tilde{\mathbb{C}}} \eps_0^{-1}\sqrt{d}\Delta_{\tilde{C}} |\sum_{v\in \Tilde{C}}{\Tilde{b}_v}| \leq \eps_0^{-1}\sqrt{d}\Lambda||B\Tilde{b}||_1.
%\sum_{p\in P}{|\Tilde{b}_p|||p-N_{\Tilde{C}(p)}||_2}+\sum_{\Tilde{C}\in \Tilde{\mathbb{C}}}{(\eps_0^{-1}+\frac{1}{2})\sqrt{d}\Delta_{\Tilde{C}}{|\sum_{v\in \Tilde{C}}{\Tilde{b}_v}}|}\le 4\Lambda\sqrt{d}(\eps_0^{-1}+\frac{1}{2})||B\Tilde{b}||_1,
\end{equation}

Therefore, \(||f^*_{\Tilde{b}}||_{\dartsof{E}}\le||f||_{\dartsof{E}}\le\kappa ||B\Tilde{b}||_1\), where \(\kappa=O(\eps_0^{-1}\log{(n/\eps_0)})\).
\end{proof}

\begin{lemma}\label{lm:cond}
Applications of \(BA\) and \((BA)^T\) to arbitrary vectors \(f \in \R^{\dartsof{E}}\) and \(\tilde{b} \in \R^V\), respectively, can be done in %\(O(\eps_0^{-2d}n\log{\frac{n}{\eps_0^{1/3}}})\) time.
\(O(m)\) time.
\end{lemma}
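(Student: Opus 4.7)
The plan is to avoid ever forming or storing $B$ explicitly (it can have $\Omega(|V|^2)$ nonzero entries) and instead realize $BA$ and $(BA)^T$ as a short sequence of operations, each either sparse (for $A$ and $A^T$) or a tree-traversal prefix/subtree aggregation (for $B$ and $B^T$) over the subcell hierarchy of the simple sub-quadtree.

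First I observe the subcell structure. Every vertex of $G$ is the net point $N_{\tilde{C}}$ of a unique subcell $\tilde{C}$, and the ``parent subcell'' relation makes these subcells into a forest (in fact a tree, given a single simple sub-quadtree), whose total size is $O(|V|)\le O(m)$. By the definition of $B$, the entry $B_{\nu,v}$ equals $\Delta_{\tilde{C}_\nu}/\Lambda$ when $v$ lies in the subtree rooted at $\nu$ and is $0$ otherwise. Thus, for any $x\in\R^V$, the vector $Bx$ is given by $(Bx)_\nu=(\Delta_{\tilde{C}_\nu}/\Lambda)\cdot s_\nu$ where $s_\nu=\sum_{v\in\mathrm{subtree}(\nu)} x_v$, and for any $z\in\R^V$, the vector $B^{T}z$ is given by $(B^{T}z)_v=\sum_{\nu\succeq v}(\Delta_{\tilde{C}_\nu}/\Lambda)\,z_\nu$, where $\nu\succeq v$ means $\nu$ is an ancestor (or equal) of $v$ in the subcell tree.

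To apply $BA$ to $f\in\R^{\dartsof{E}}$ I would proceed in two passes. Pass one: compute $x=Af$ in $O(m)$ time by iterating over $\dartsof{E}$ and accumulating $f_{(u,v)}$ at $u$ and $-f_{(u,v)}$ at $v$, using that $A$ has exactly two nonzeros per column. Pass two: compute $Bx$ by running a single postorder traversal of the subcell tree, setting $s_\nu = x_\nu + \sum_{\text{child }\mu} s_\mu$ and then $(Bx)_\nu = (\Delta_{\tilde{C}_\nu}/\Lambda)\,s_\nu$. Each node is visited once with $O(1)$ work (every subcell has at most $2^d$ child subcells in the hierarchy induced by one simple sub-quadtree, since each cell is split into $\eps_0^{-d}$ subcells and the parent-subcell pointers make the branching bounded by $2^d$ across levels — in any case the total work is linear in the number of parent-child pairs, which is $O(|V|)\le O(m)$). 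The traversal itself is $O(|V|)$, yielding $O(m)$ total.

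To apply $(BA)^T=A^T B^T$ to $z\in\R^V$ I would reverse the order. First compute $w=B^T z$ by a preorder traversal of the subcell tree using the recurrence $w_\nu=w_{\mathrm{parent}(\nu)}+(\Delta_{\tilde{C}_\nu}/\Lambda)\,z_\nu$, with $w$ at the (virtual) parent of the root set to $0$; this correctly realizes the ancestor sum defining $B^T z$ in $O(|V|)$ time. Then compute $A^{T}w$ in $O(m)$ time by setting $(A^{T}w)_{(u,v)}=w_u-w_v$ for each oriented edge. The only conceptual obstacle is the denseness of $B$, which the two-stage factorization $B=(\text{diagonal scaling})\circ(\text{subtree-sum along the subcell tree})$ and its transpose $(\text{ancestor-sum})\circ(\text{diagonal scaling})$ completely bypass, so linear time follows.
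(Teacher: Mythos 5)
Your proposal is correct and follows essentially the same route as the paper: compute $Af$ (resp.\ $A^T w$) using the sparsity of $A$, and realize $B$ and $B^T$ implicitly as a diagonal scaling combined with subtree sums via a postorder traversal (for $B$) and ancestor sums via a preorder traversal (for $B^T$) over the subcell hierarchy, giving $O(m)$ time overall.
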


\begin{proof}
Both applications can be performed using dynamic programming algorithms.

\paragraph*{Computing \(BAf\)} Let \(A'=Af\). Recall, \(\forall v\in V\), \(A'_v\) is the divergence of \(v\) given flow \(f\).
Matrix \(A\) has \(m\) non-zero entries, so \(A'\) can be computed in \(O(m)\) time.
%and \(A'_v=A_v\cdot f^T\). We know that every column in \(A\) is indexed by an edge in \(\dartsof{E}\) and the column indexed by \((\dartsof{p, q})\) only has two non-zero entries at the rows indexed by \(p\) and \(q\) for any \((\dartsof{p, q})\in \dartsof{E}\). Therefore, we can compute \(Af\) in \(O(|V|+|\dartsof{E}|)\) time.

We compute \(BAf\) by computing \(BA'\).
Let \(\nu\) be any net point of \(G\), and let \(\tilde{C}\) be its subcell.
From the definition of \(B\), we have \((BA')_{\nu} = \frac{\Delta_{\tilde{C}}}{\Lambda} \sum_{v \in \tilde{C}} A'_v\).
Now, let \(\tilde{C}^+\) be the (possibly empty) set of all child subcells of \(\tilde{C}\) with net points in \(G\).
%Suppose \(B_{p}\) is the row indexed by vertex \(p\), then all non-zero entries have value \(\frac{\Delta}{\Lambda}\) if \(p\) is a net point of subcell of side length \(\Delta\) or there is only one non-zero entry in \(B_p\) with value \(||p-N_{\Tilde{C}(p)}||_2\) if \(p\in P\) is a real point. 
%Let \(M=BAf\). Then we have \(M_p=B_p\cdot A^{'T}=\frac{\Delta}{\Lambda}{|\sum_{v\in \Tilde{C}}{A'_v}|}\), if \(p\) is a net point of subcell \(\Tilde{C}\), and \(M_p=||p-N_{\Tilde{C}(p)}||_2A'_p\) if \(p\in P\) is a real point.
%Clearly, \(M_p\) for all \(p\in P\) can be easily computed in \(O(|P|)\) time.
%When \(p\) is a net point of subcell \(\Tilde{C}\), let \(\Tilde{C}^+\) be the set of all child subcells of \(\Tilde{C}\). 
We have \(\sum_{v\in \Tilde{C}}{A'_v} = A'_{\nu} + \sum_{\Tilde{C}'\in \Tilde{C}^+}{\sum_{v\in\Tilde{C}'} {A'_v}}\).
Thus, we can use dynamic programming to compute \(BA'\) in \(O(m)\) time.
Each entry is filled in during a postorder traversal of the quadtree cells.
\begin{comment}
%Thus, we can use the following algorithm to compute \(BA'\) in \(O(|V|+|\dartsof{E}|)\) time.
%\vspace{0.1cm}
\begin{algo}
    \(B'\gets 0^V\)
    \\  for \(p \in P\) \+
    \\    \(M_p\gets A'_p\)
    \\    \(\nu\gets \textsc{Parent-net-point}(p)\)
    \\    \(M_{\nu}\gets M_{\nu}+M_p\)\-
    \\  Let \(N\) be the set of net points in postorder traversal
    \\  for \(p \in N\) \+
    \\    \(M_p\gets M_p+A'_p\)
    \\    \(\nu\gets \textsc{Parent-net-point}(p)\)
    \\    \(M_{\nu}\gets M_{\nu}+M_p\)\-
    \\  for \(p\in V\) \+
    \\    if \(p \in P\) \hfill // \(p \text{ is a real point}\)\+
    \\      \(M_p\gets M_p||p-N_{\Tilde{C}(p)}||_2\)\-
    \\    else \hfill // \(p \text{ is a net point of a subcell of side length }\Delta\)\+
    \\      \(M_p\gets M_p \frac{\Delta}{\Lambda}\)\-
\end{algo}
\end{comment}

\paragraph*{Computing \((BA)^T \tilde{b}\)} 
Recall, \((BA)^T = A^T B^T\).
Let \(b'=B^T\tilde{b}\).
We begin by computing \(b'\).
Let \(\tilde{C}\) be any subcell with a net point in \(G\), and let \(\nu = N_{\tilde{C}}\).
Let \(\Tilde{C}^-\) be the set of all ancestor subcells of \(\Tilde{C}\) with net points in \(G\) including~\(\Tilde{C}\).
We have \(b'_{\nu} = \sum_{\tilde{C}' \in \tilde{C}^-} \frac{\Delta_{\tilde{C}'}}{\Lambda} \tilde{b}_{N_{\tilde{C}'}} = \frac{\Delta_{\tilde{C}}}{\Lambda} \tilde{b}_{\nu} + b'_{N_{\Tilde{C}^{p}}}\).
Therefore, we can use dynamic programming to compute \(b'\) in \(O(m)\) time.
Each entry is filled in during a \emph{pre}order traversal of the quadtree cells.
Finally, \(A^T\) has \(m\) non-zero entries, so \(A^T B^T \tilde{b} = A^T b'\) can be computed in \(O(m)\) time as well.
\end{proof}

We have shown there exists a \((1+2\eps, 0)\)-solver for the minimum cost flow problem on \(G\).
Plugging in all the pieces, we get a running time bounded by

\[
O(m\eps_0^{-2}\log^3{(n/\eps_0)(\eps^{-2}+\log{(n/\eps_0)}))}.
% O(m\eps_0^{-2}\log^3{(n/\eps_0)(\eps^{-2}+\log{(n/\eps_0)}))}
\]

Recall, \(\eps_0 = O(\eps/\log{n})\).
% We run the preconditioning framework algorithm in each graph \(G\) induced by a simple sub-quadtree's net points as described in Section~\ref{sec:spanner-decomposition}.
We run the preconditioning framework algorithm in each graph \(G\) induced by a simple sub-quadtree's net points as described in Section~\ref{sec:spanner-decomposition}.
The final running time to compute a flow in \(G^*\) of cost at most \((1 + \eps) \cost(P, \mu)\) is
\[
n\eps^{-O(d)}\log^{O(d)}{n}.
% O(n\eps^{-O(d)}\log^{O(d)}{(n/eps)}).
% O(n\eps^{-2d-4}\log^{2d+7}{(n/eps))}
\]

%%% Local Variables:
%%% TeX-master: "main.tex"
%%% End:

%%
%% INPUT
%%
\section{Recovering a transportation map from the minimum cost flow}\label{sec:recover}
We now describe how to recover a transportation map of \(P\) using the approximately minimum cost flow \(\hat{f} \in \dartsof{E}\) we computed for \(G^*\).
%Then \(\hat{f}\) is in \(\mathbb{R}^{\dartsof{E}}\).
Unlike~\(\hat{f}\), the transportation map \(\tau\) contains only weighted pairs of points in \(P\).
We will \emph{implicitly} maintain a flow \(f\) of cost at most \(||\hat{f}||_{\dartsof{E^*}}\) that will eventually describe our transportation map.
Abusing notation, we extend the definition of \(f_{(u,v)}\) to include any pair of vertices in \(G^*\).
Value \(f_{(u,v)}\) is initially \(0\) for all \(uv \notin E^*\).
We follow the strategy of Khesin \etal~\cite{khesin2021preconditioning} of iteratively rerouting flow going through each net point~\(\nu\) to instead go directly between vertices receiving from or sending flow to~\(\nu\), eventually resulting in no flow going through any net point.
Nearly every pair containing a point \(p \in P\) and an ancestor net point may at some moment carry flow during this procedure.
Because quadtree \(T^*\) has such high depth, we must take additional care.
%We describe an algorithm to compute a transportation map 
%Assume we store \(\hat{f}\) in an adjacency list \(f\) such that \(f_{(u, v)}=\hat{f}_{(\dartsof{p, q})}, f_{qp}=-f_{(u, v)}, \forall (\dartsof{p, q})\in \dartsof{E}\).

To quickly maintain these flow assignments with points in \(P\), we store two data structures \(pt(\nu)\) and \(nt(\nu)\) for each net point \(\nu\in V^* \setminus P\).
We call these data structures the \EMPH{prefix split trees} of \(\nu\).
The prefix split tree is stored as an ordered binary tree data structure where each node has a weight.
We let \(w(x)\) denote the weight of node \(x\) in a tree~\(S\) and \(w(S)\) denote the total weight of all nodes in \(S\).
These trees support the standard operations of insertion and deletion.
They support changing the weight of a single node.
They support the \(\textsc{Merge}(S, S')\) operation which takes two trees \(S\) and \(S'\) and combines them into one tree with all members of \(S\) appearing in order before \(S'\).
Finally, they support the \(\textsc{PrefixSplit}(S, t)\) operation defined as follows.
Given a target value \(t\) and a prefix split tree \(S\), \textsc{PrefixSplit} finds a maximal prefix of \(S\)'s nodes in order where the sum of node weights in the subset is less than or equal to \(t\).
If the sum is less than \(t\), it splits the next node~\(x\) into two nodes~\(x_1\) and~\(x_2\) where \(w(x_1) + w(x_2) = w(x)\).
The split guarantees adding \(x_1\) to the maximal prefix subset makes the sum weight of the subset exactly equal to \(t\).
The operation then splits off all members of this subset, including \(x_1\) if a node \(x\) was split, into their own tree \(S'\) and returns it, leaving \(S\) with only the remaining nodes.
We emphasize that the order of nodes within the data structure is important for defining \textsc{PrefixSplit}, but the nodes are not ``sorted'' in any meaningful sense;
in particular, any two trees can be merged as defined above.
All those operations can be done in amortized \(O(\log m)\) time, where \(m\) is the number of nodes in the tree, by applying simple modifications to the splay tree data structure of Sleator and Tarjan~\cite{DBLP:journals/jacm/SleatorT85}.
We provide details on how to implement a prefix split tree in Appendix~\ref{app:prefix}.

In our setting, every node in \(pt(\nu)\) and \(nt(\nu)\) represents a point \(p \in P\).
%such that \(f_{(\nu, p)}>0\), and every node in \(nt(\nu)\) represent a real point \(p\) with \(f_{(\nu, p)}<0\).
Thanks to our use of the \textsc{PrefixSplit} procedure, some points may be represented multiple times in a single tree.
We use \(pt(\nu)[p]\) to denote the set nodes representing \(p\) in \(pt(\nu)\), and define \(nt(\nu)[p]\) similarly.
Our algorithm implicitly maintains the invariant that for all net points \(\nu\) and points \(p \in P\), \(\sum_{x\in pt(\nu)[p]}{w(x)}-\sum_{x\in nt(\nu)[p]}{w(x)}=f_{(\nu, p)}\).
%(a point may be represented by multiple branch nodes later in the algorithm).
We proceed with the algorithm given in Figure~\ref{fig:recover}.
\begin{comment}
We give the following algorithm to transform the approximate minimum cost flow \(\hat{f}\in \mathbb{R}^{\dartsof{E}}\) to \(\tau\in\mathbb{R}^{P\times P}\) such that the following properties hold:

\begin{enumerate}
    \item \(\tau\) is a transportation map for \(P\) and \(u\), \(\forall p\in P, \sum_{q\in P}\tau(p,q)=\mu(p)\).
    \item \(\||(\tau)||=\sum_{(p, q)\in P\times P}{|\tau(p, q)|\,||p-q||_2}\le ||f||_{\dartsof{E}}\).
\end{enumerate}
\end{comment}

\begin{figure}[t]
\begin{algo}
    \Comment{Initialize data structures.}
    \\ For all net points \(v \in V^* \setminus P\) and \(p \in P\) where \(f_{(p, v)} > 0\)\+
    \\    Insert a node of weight \(f_{(p,v)}\) into \(nt(v)\) representing \(p\)\-
    \\  For all net points \(v \in V^* \setminus P\) and \(p \in P\) where \(f_{(v, p)} > 0\)\+
    \\  Insert a node of weight \(f_{(v,p)}\) into \(pt(v)\) representing \(p\)\-
    \\  Let \(\mathbb{C}\) be the set of all cells
    \\  For \(C \in \mathbb{C}\) in postorder \+
    \\    Let \(N_C=\{N_{\Tilde{C}}: \Tilde{C}\in C\}\), \(N'_{C}=\{\text{parent of }v: v \in N_C\}\)
    \\  For each \(v \in N_C\)\+
    \\  \Comment{Cancel flow to/from other net points.}
    \\    While \(\exists u,w\in N_C\cup N'_{C}: f_{(v,w)}>0>f_{(v, u)}\) \+
    \\      \(\delta\gets \Min\{f_{(u, v)}, f_{(v, w)}\}\)
    \\      \(f_{(u, w)}\gets f_{(u, w)}+ \delta\)
    \\      \(f_{(u, v)}\gets f_{(u, v)} - \delta\)
    \\      \(f_{(v, w)}\gets f_{(v, w)} - \delta\)\-
    \\  \Comment{Now, either all other net points send flow to \(v\) or all get flow from \(v\).}
    \\  While \(\exists u \in N_C\cup N'_{C}: f_{(u, v)}>0\)\+
    \\  \Comment{Implicitly reduce \(f(v, p)\) and increase \(f(u, p)\) for several \(p \in P\)}
    \\    \(pt'\gets \textsc{PrefixSplit}(pt(v), f_{(u, v)})\)
    \\    \(\textsc{Merge}(pt', pt(u))\)\-
    \\  While \(\exists w \in N_C\cup N'_{C}: f_{(v, w)}>0\)\+
    \\  \Comment{Implicitly reduce \(f(p, v)\) and increase \(f(p, w)\) for several \(p \in P\)}
    \\    \(nt'\gets \textsc{PrefixSplit}(nt(v), f_{(v, w)})\)
    \\    \(\textsc{Merge}(nt', nt(w))\)\-
    \\  \Comment{Now, all flow to/from \(v\) involves points \(p \in P\).}
    \\  While \(pt(v)\) and \(nt(v)\) are not empty\+
    \\    Let \(x \in nt(v)[p], y \in pt(v)[q]\) for some \(p, q \in P\)
    \\    \(\delta \gets \Min\{w(x), w(y)\}\)
    \\    \(f_{(p, q)}\gets f_{(p, q)}+ \delta\)
    \\    \(w(x) \gets w(x) - \delta\); if \(w(x) = 0\), delete \(x\) from \(nt(v)\)
    \\    \(w(y) \gets w(y) - \delta\); if \(w(y) = 0\), delete \(y\) from \(pt(v)\)\-\-\-
    \\  For all \((p, q) \in P \times P\) where \(f_{(p, q)} > 0\)\+
    \\    \(\tau(p, q) \gets f_{(p,q)}\)
\end{algo}
\caption{Recovering a transportation map from an approximately minimum cost flow in \(G^*\).}
\label{fig:recover}
\end{figure}
%\note{Let me adjust the cross page pseudocode after all things finish. Just know overleaf supports regexp, I changed all inline mathenv to \(\backslash ( \backslash )\)}
%}

\begin{lemma}
\label{lm:recover}
The algorithm described above produces a transportation map with cost at most \(||\hat{f}||_{\dartsof{E^*}}\) in \(n\eps_0^{-2d}\log^2 (n/\eps_0)\) time.
%\log(\epsilon_0^{-2d}n\log{\frac{n}{\epsilon_0^{1/3}}}))\) time.
\end{lemma}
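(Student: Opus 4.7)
The plan is to verify correctness (that the output \(\tau\) is a valid transportation map with cost at most \(\|\hat{f}\|_{\dartsof{E^*}}\)) and then the running time bound, separately.

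For correctness, I would establish three claims in turn. First, by induction over the algorithm's operations, the stated invariant \(\sum_{x \in pt(\nu)[p]} w(x) - \sum_{x \in nt(\nu)[p]} w(x) = f_{(\nu, p)}\) is preserved: initialization installs it (using that the only edges of \(G^*\) touching any \(p \in P\) go to \(N_{\tilde{C}(p)}\)); each \textsc{PrefixSplit} from \(pt(v)\) followed by \textsc{Merge} into \(pt(u)\) moves weight from \(pt(v)\) into \(pt(u)\) in exact lockstep with the intended implicit flow update ``decrease several \(f_{(v,p_i)}\) and increase the matching \(f_{(u,p_i)}\) by the same amounts''; and the explicit deletions in the final cancellation loop correspond to identical decrements of \(f_{(p,v)}\) and \(f_{(v,q)}\). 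Second, every rerouting preserves flow conservation \(Af=b^*\) and does not increase \(\|f\|_{\dartsof{E^*}}\): Step A rewires \(u\to v\to w\) flow as direct \(u\to w\) flow, Step B rewires \(u\to v\to p_i\) flow as direct \(u\to p_i\) flow, and the final inner loop rewires \(p\to v\to q\) flow as direct \(p\to q\) flow; in all cases divergences at every vertex are preserved and the triangle inequality implies the cost does not grow. Third, the algorithm terminates with all net-point flow cleared: Step A makes the neighbor flow at \(v\) one-sided, so once Step B empties it, flow conservation at \(v\) forces \(w(pt(v)) = w(nt(v))\), and the final loop therefore drains both trees to zero in lockstep; residual flow that Step A creates between other net points in \(N_C\) is handled when those net points are themselves processed later within the same cell, while flow pushed up into \(N'_C\) is handled later in the postorder when the parent cell is processed. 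After the root cell finishes, \(f\) is supported only on pairs in \(P\times P\), and setting \(\tau(p,q):=f_{(p,q)}\) gives a valid transportation map of cost at most \(\|\hat{f}\|_{\dartsof{E^*}}\).

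For the running time, I would bound each class of work separately. Initialization costs \(O(n \log m)\), since each point of \(P\) has only one incident edge in \(G^*\). The total number of Step A cancellations across all net points, and similarly the total number of \textsc{PrefixSplit}/\textsc{Merge} pairs performed in Step B, are each bounded by the number of net-point-to-net-point edges in \(G^*\), namely \(O(n\eps_0^{-2d}\log(n/\eps_0))\). Since each \textsc{PrefixSplit} creates at most one new node, the total number of nodes ever appearing in any prefix split tree over the entire algorithm is \(O(n\eps_0^{-2d}\log(n/\eps_0))\); this also bounds the combined iterations of the direct cancellation loops over all net points, as each such iteration deletes at least one node. Charging amortized \(O(\log m)=O(\log(n/\eps_0))\) per data structure operation then yields a total of \(O(n\eps_0^{-2d}\log^2(n/\eps_0))\) time.

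The main obstacle I anticipate is the implicit bookkeeping in Step B: one must argue that a single \textsc{PrefixSplit}/\textsc{Merge} pair simultaneously zeroes out \(f_{(u,v)}\), decreases several values \(f_{(v,p_i)}\) by amounts summing to \(f_{(u,v)}\), and increases the corresponding \(f_{(u,p_i)}\) by the same amounts, all while preserving divergences at \(u\), \(v\), and every \(p_i\). Writing this cleanly and tying it to the global invariant is the principal conceptual step. A secondary obstacle is verifying that the prefix split tree really admits \(O(\log m)\) amortized cost for both \textsc{Merge} and \textsc{PrefixSplit} under the interleaved workload we generate, which I would defer to Appendix~\ref{app:prefix}.
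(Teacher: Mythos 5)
Your correctness argument follows the paper's proof essentially step for step (the prefix-split-tree invariant, preservation of divergences, the triangle inequality for the cost, and termination via the postorder together with the fact that a processed net point never carries flow again), and your accounting of node creations and deletions with \(O(\log m)\) amortized cost per tree operation also matches the paper. The one genuine gap is your bound on the number of iterations of the first three while loops. You charge these iterations to ``net-point-to-net-point edges of \(G^*\),'' but the cancellation step places flow on vertex pairs that are \emph{not} edges of \(G^*\): for example on a pair \((u,w)\) with \(u \in N_C\) and \(w \in N'_C\) the parent of a \emph{different} subcell's net point, or on two members of \(N'_C\). These non-edge pairs trigger further iterations when their endpoints are processed later, and the obvious potential argument does not close: each cancellation at \(v\) zeroes one pair incident to \(v\) but may create one new flow-carrying pair elsewhere, so the number of flow-carrying pairs need not decrease, and ``total iterations \(\le\) number of \(G^*\) edges'' does not follow from what you wrote, even though the asserted magnitude \(O(n\eps_0^{-2d}\log(n/\eps_0))\) is correct.

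The fix is the locality argument the paper uses (and which you implicitly invoke in your termination discussion): because cells are processed in postorder and a processed net point never again carries flow, at the moment a net point \(v\) of cell \(C\) is processed, every net point \(u\) with \(f_{(u,v)} \neq 0\) lies in \(N_C \cup N'_C\) --- flow created while processing a child cell \(C'\) involves only \(N_{C'} \cup N'_{C'}\), and the parent net points \(N'_{C'}\) are net points of subcells of \(C\). Since \(|N_C \cup N'_C| = O(\eps_0^{-d})\) and no pair incident to \(v\) is created while \(v\) itself is being processed, each net point accounts for \(O(\eps_0^{-d})\) iterations; summing over the \(O(n\eps_0^{-d}\log(n/\eps_0))\) net points gives the \(O(n\eps_0^{-2d}\log(n/\eps_0))\) iteration bound, after which the rest of your running-time analysis goes through unchanged.
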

% As in Khesin \etal~\cite{khesin2021preconditioning}, we prove Lemma~\ref{lm:recover} by arguing that we remove all flow passing through each net point encountered during our postorder traversal of~\(T^*\).
% Each change in the flow reduces its cost, and the number of changes involving two or more net points is nearly bounded by the size of~\(T^*\).
% To account for the time spend moving flow to or from points \(p \in P\) in the forth while loop, we charge such operations to the moving of flow directly between net points.
% See Appendix~\ref{app:omitted} for details.

\begin{proof}%[of Lemma~\ref{lm:recover}]
\begin{comment}
\begin{lemma}\label{lm:tm}\rm{\cite{khesin2021preconditioning}}
Let \(f\) be a flow on \(V\times V\), let \(P\) be a subset of \(V\). Let \(\mu:P\to \mathbb{R}\) be the supply function on \(P\). If \(\forall p\in P\), \(\sum_{v\in V}{f_{pv}}=\mu(p)\) and \(\forall w\in V\backslash P, f_{pw}=0\), then \(\forall p\in P\), \(\sum_{q\in P}{f_{pq}}=\mu(p)\), and the subset of \(f\) on \(P\times P\) is a transportation map for \(P\) and \(\mu\).
\end{lemma}
\end{comment}

%\begin{proof}
%For any \(p\in P\), if \(\sum_{v\in V}{f_{pv}}=\mu(p)\) and \(\forall w\in V\backslash P, {f_{pw}}=0\), then \(\sum_{q\in P}{f_{pq}}=\sum_{v\in V}{f_{pv}}-\sum_{w\in V\backslash P}{f_{pw}}=\mu(p)\). Then \(f'=\{f_{pq}: f_{pq}\in f, (p, q)\in P\times P\}\) is a transportation map w.r.t. \(P\) and \(\mu\) by definition.
%\end{proof}

%Our algorithm processes each cell once, and the cells are processed by postorder.
As stated, our algorithm implicitly maintains a flow \(f\) such that for all net points \(\nu\) and points \(p \in P\), \(\sum_{x\in pt(\nu)[p]}{w(x)}-\sum_{x\in nt(\nu)[p]}{w(x)}=f_{(\nu, p)}\).
After every iteration of any of the while loops, the divergences among all vertices in \(G^*\) remain the same.
Further, after processing any net point \(v\) in the inner for loop, there are no other vertices \(u\) in \(V^*\) such that \(f_{(u,v)}\neq 0\).
Observe the algorithm never changes the flow coming into or out of a net point \(v\) unless \(f_{(u,v)} \neq 0\) for some vertex \(u\).
Therefore, after \(v\) is processed, it \emph{never} has flow going into or out of it again (Khesin \etal~\cite{khesin2021preconditioning} refer to this property as \(v\) having \emph{uniform flow parity}).
Because we eventually process every net point in \(G^*\), we eventually end up with a flow \(f\) such that \(f_{(p,q)} \neq 0\) only if \(p,q \in P\).
We immediately see \(\tau\) is a transportation map.

To analyze the cost of \(\tau\), observe that after every iteration of a while loop, we replace some \(\delta\) units of flow passing through \(v\), possibly between multiple sources and one destination or vice versa, with \(\delta\) units going directly from the source(s) to the destination(s).
By the triangle inequality, this new way to route flow is cheaper, so the final flow \(f\), and subsequently \(\tau\) has smaller cost than \(\hat{f}\).

To implement our algorithm quickly, we only explicitly store new flow values whenever we have a line ``\(f_{(u, w)} \gets \_\)'' for some pair of vertices \((u,w)\).
Observe that every time we finish processing a cell, every one of its net points is also processed.
By the above discussion, flow no longer passes through those net points.
Therefore, as we process the net points for a cell \(C\), we never send flow from a net point \(v \in N_C\) to a net point outside \(N_C \cup N'_{C}\).
Every time we change flow going through another net point while processing a net point \(v\), we
decrease the number of net points \(u\) such that \(f_{(u,v)} \neq 0\) by one.
There are \(O(n \eps_0^{-d} \log (n/\eps_0))\) net points, and \(O(\eps_0^{-d})\) other net points in each \(N_C \cup N'_C\), so the number of iterations total in the first three while loops is \(O(n \eps_0^{-2d} \log (n/\eps_0))\).
Finally, observe that we only do \(\textsc{PrefixSplit}\) operations during these while loops, implying we create a total of \(O(n \eps_0^{-2d} \log (n/\eps_0))\) nodes throughout all prefix split trees.
Every iteration of the fourth while loop results in deleting a node from at least one of \(nt(v)\) or \(pt(v)\), so the number of iterations of this while loop is \(O(n \eps_0^{-2d} \log (n/\eps_0))\) as well.
Finally, every while loop iteration consists of a constant number of constant time operations in
addition to a constant number of prefix split tree operations, each of which can be done in \(O(\log
(n/\eps_0))\) amortized time.
\begin{comment}
Based on the policy we use for choosing \(\delta\), every inner loop at least decreases the number of edges in the support of \(f\) by \(1\) or decreases the number of nodes in \(\{pt(\nu)\cup nt(\nu): \nu\in V\backslash P\}\) by \(1\).
Every \textsc{PrefixSplit} increase the number of nodes through all prefix split trees by \(1\).
But the number of \textsc{Prefix-split} operations for every point \(\nu\in N_C\) is bound by \(|N'_{C}|\). For every cell \(C\), \(|N_C|\le \epsilon_0^{-d}, |N'_{C}|\le ({2\epsilon_0})^{-d}\), the total number of cells is \(O(n\log{\frac{n}{\epsilon_0^{1/3}}})\). So the maximum size of \(\{pt(\nu)\cup nt(\nu): \nu\in V\backslash P\}\) is \(O(\eps_0^{-2d}n\log{\frac{n}{\epsilon_0^{1/3}}})\). The total number of operations is bounded by \(O(|V|+|E|+|\{pt(\nu)\cup nt(\nu): \nu\in V\backslash P\}|=O(\epsilon_0^{-2d}n\log{\frac{n}{\epsilon_0^{1/3}}})\). Every operation on the split tree can be done in amortized \(O(\log(\epsilon_0^{-2d}n\log{\frac{n}{\epsilon_0^{1/3}}}))\) time, all other operations can be done in constant time. In total, the running time of recovering a transportation map \(\tau\) with \(\mathbb{E}(\mu(\tau))\le(1+O(\epsilon))\mu(\tau^*)\) is \(O(\epsilon_0^{-2d}n\log{\frac{n}{\epsilon_0^{1/3}}}\log(\epsilon_0^{-2d}n\log{\frac{n}{\epsilon_0^{1/3}}}))\). This upper bound is asymptotically smaller than the upper bound of the \((1+2\epsilon)\)-solver of the minimum cost flow problem.
\end{comment}
\end{proof}

Recall the procedure of Section~\ref{sec:sol} will find a flow of cost at most \((1 + \eps) \cost(P, \mu)\) in \(n\eps^{-O(d)}\log^{O(d)}{n}\) time.
The procedure described above will then extract a transportation map from it. We conclude the proof of Theorem~\ref{theorem-result}.

\paragraph*{Acknowledgement} The authors would like to thank Hsien-Chih Chang for some helpful discussions that took place with the first author at Dagstuhl seminar 19181 ``Computational Geometry''.
We would also like to thank the anonymous reviewers of the conference version of this
paper~\cite{fl-ntasg-20} as well as this full version for many helpful comments and suggestions.

%%% Local Variables:
%%% TeX-master: "main.tex"
%%% End:

\bibliographystyle{abbrv}
\bibliography{references}

\appendix
%\section{Decomposing minimum cost flow into simpler subproblems}\label{app:decomposition}
%\input{sections/appendix-decomposition.tex}
\section{Prefix split trees}\label{app:prefix}
We implement our prefix split trees by modifying the splay tree data structure of Sleator and Tarjan~\cite{DBLP:journals/jacm/SleatorT85}.
Let \(S\) be a prefix split tree.
We store the weight~\(w(x)\) of each node \(x\) directly with the node itself.
%Every node in the prefix split tree has a weight. Let \(w(x)\) denote the weight for node \(x\).
Moreover, every node \(x\) keeps another value \(W(x)\) equal to the sum weight of all the descendants of \(x\) including \(x\) itself.
%For any node \(x\), let \(lc\) and \(rc\) be its left and right children. Given \(W(lc)\) and \(wW(rc)\), \(W(x)\) can be computed in \(O(1)\) time. 

A \EMPH{splay} of a node \(x\) in \(S\) is a sequence of double rotations (possibly followed by a standard single rotation) that move \(x\) to the root of \(S\).
Only those nodes on the path from the root to \(x\) have their children pointers updated by a splay.
We can update \(W(y)\) for every such node \(y\) with only a constant factor overhead in the time to perform a splay.
%We slightly modify the procedure of splaying of a node.
%During the splay, we recompute \(W(x)\) for every \(x\) in the splaying path in postorder after the splaying. The cost of recomputing one node is \(O(1)\), so the cost of splaying node is still amortized \(\log{n}\).
Let \(s(x)\) denote the number of descendants of \(x\) in its prefix split tree, and let \(r(x) = \Floor{\lg s(x)}\).
Let \(\Phi(S) = \sum_{x \in S} r(x)\).
The \EMPH{amortized time} for an operation on \(S\) can be defined as the real time spent on the operation plus the net change to \(\Phi(S)\) after the operation.
The amortized time for a splay in an \(m\)-node tree is \(O(\log m)\)~\cite{DBLP:journals/jacm/SleatorT85}.

Recall, the order of nodes within a tree is largely irrelevant outside the definition of the \(\textsc{PrefixSplit}\) operation.
To insert a node \(x\) in \(S\), we add \(x\) as the child of an arbitrary leaf of \(S\) and splay \(x\) to the root.
The number of operations in the splay dominates, so the amortized cost of insertion is \(O(\log m)\).
To delete a node \(x\), we splay \(x\) to the root and delete it, resulting in two disconnected subtrees \(S_1\) and \(S_2\).
We then perform a \(\textsc{Merge}(S_1, S_2)\) in \(O(\log m)\) amortized time as described below, so the whole deletion has amortized cost \(O(\log m)\).
To update the weight of a node \(x\), we splay \(x\) to the root and update \(w(x)\) and \(W(x)\) in constant time each.
The splay once again dominates, so the total amortized cost is \(O(\log m)\).

The operation \(\textsc{Merge}(S_1, S_2)\) is implemented as follows.
Let \(x\) be the rightmost leaf of \(S_1\).
We splay \(x\) to the root so it has exactly one child.
We then make the root of \(S_2\) the other child of \(x\).
Let \(m\) be the total number of nodes in \(S_1\) and \(S_2\).
Adding the root of \(S_2\) as a child increases \(\Phi(S_1) + \Phi(S_2)\) by \(O(\log m)\), so the
amortized time for the \textsc{Merge} is \(O(\log m)\).

Finally, we discuss the implementation of \(\textsc{PrefixSplit}(S, t)\).
We assume \(t > 0\).
We use the values \(W(\cdot)\) to find the prefix of nodes desired.
Let \(y\) be the next node in order after the prefix.
We splay \(y\) to the root of \(S\).
Let \(x\) be the left child of \(y\) after the splay (if it exists).
Suppose \(W(x) < t\).
We delete \(y\), creating two trees \(S_1\) and \(S_2\) where \(S_1\) contains the nodes in the prefix.
We create a new node \(y_1\) of weight \(t - W(y)\) and make the root of \(S_1\) its child so that \(y_1\) is the new root.
We create a node \(y_2\) of weight \(w(y) - w(y_1)\) and make the root of \(S_2\) its child.
Now, suppose instead \(W(x) = t\).
In this case, we simply remove the edge between \(x\) and \(y\) to create a subtree \(S_1\) with \(x\) as its root.
Let \(S_2\) be the remainder of \(S\).
Whether or not \(W(x) = t\), we return \(S_1\) and set \(S = S_2\).
The amortized time for the \textsc{PrefixSplit} is the amortized time for a single splay and a constant number of edge changes, implying the \textsc{PrefixSplit} takes \(O(\log m)\) amortized time total.

%\section{Omitted Proofs}\label{app:omitted}
%\input{sections/appendix-omitted_proofs.tex}

\end{document}